\documentclass{amsproc}
\usepackage{amssymb,graphicx, cancel,ulem}


\def\T3{T_3}

\def\Qcal{{\mathcal Q}}
\def\Pcal{{\mathcal P}}
\def \CC{\mathcal C}
\def \SS{\mathfrak S}
\newlength{\dinwidth}
\def\le{\left}

\def\ri{\right}
\def \P{\mathcal P}
\newlength{\dinmargin}
\setlength{\dinwidth}{21.0cm}
\textheight23.0cm   
\textwidth17.0cm
\setlength{\dinmargin}{\dinwidth}
\addtolength{\dinmargin}{-\textwidth}
\setlength{\dinmargin}{0.5\dinmargin}
\oddsidemargin -1.0in
\addtolength{\oddsidemargin}{\dinmargin}
\setlength{\evensidemargin}{\oddsidemargin}
\setlength{\marginparwidth}{0.9\dinmargin}
\topmargin -40pt
\headheight 12pt
\headsep 30pt
\footskip 24pt
\def\bea{\begin{eqnarray}}
\def\eea{\end{eqnarray}}   

\def \&{\hspace{-15pt}&}

\def \d{{\mathrm d}}

\newtheorem{definition}{Definition}[section]
\newtheorem{theorem}{Theorem}[section]
\newtheorem{proposition}{Proposition}[section]
\newtheorem{corollary}{Corollary}[section]
\newtheorem{remark}{Remark}[section]

\def\be{\begin{equation}}
\def\ee{\end{equation}}
\def\ben{\begin{displaymath}}
\def\een{\end{displaymath}}
\def\baa{\begin{eqnarray}}
\def\eaa{\end{eqnarray}}

\def\ba{\begin{array}}
\def\ea{\end{array}}
\makeatletter
\@addtoreset{equation}{section}
\makeatother

\def\CV{{\mathbb V}}
\def\Tcal{{\mathcal T}}

\def\Proj{ \mathbb S}

\def\M{{\mathcal M}}

\def\L{{\mathcal L}}

\def\tr{{\rm tr}\, }
\def\phi{\varphi}

\def\C{{\mathbb C}}
\def\CP1{{\mathbb C\mathbb P}^1}

\def\Z{{\mathbb  Z}}
\def\R{{\mathbb R}}
\def\a{\alpha}
\def\g{\gamma}
\def\b{\beta}
\def\ka{\kappa}

\def\p{\partial}

\def\Ccal{{\mathcal C}}
\def\Ch{{\widehat{\mathcal C}}}

\def\Mcal{{\mathcal M}}
\def\proj{{S}}

\def\gt{{\tilde{\gamma}}}

\def\dim{{\rm dim}}

\def\gh{\widehat{g}}


\def\Qc{{\mathcal Q}}

\def\f{\frac}
\def\la{\label}


\def\Ld{{\Lambda}}  
\def\qd{Q} 
\def\Scal{{\mathcal S}} 


\def\qt{{\tilde{q}}}

\def \1{{\mathrm I}}

\def \bd{\begin{definition}}
\def\ed{\end{definition}}

\begin{document}

\title{Periods of meromorphic quadratic differentials  and Goldman bracket}

\author{D. Korotkin}
\address{Department of Mathematics and
Statistics, Concordia University\\ 1455 de Maisonneuve W., Montr\'eal, Qu\'ebec,
Canada H3G 1M8}
\email{Dmitry.Korotkin@concordia.ca}
\thanks{The author thanks Marco Bertola and Chaya Norton for numerous useful conversations. The research of the author was supported by NSERC and FQRNT}

\subjclass[2010]{Primary 53D30, Secondary 34M45}

\begin{abstract}
We study symplectic properties of monodromy map for second order  linear equation with meromorphic potential
having only simple poles  on a Riemann surface. We show that the canonical symplectic structure on the cotangent bundle $T^*\Mcal_{g,n}$ implies  the Goldman Poisson bracket on the corresponding character variety under the 
monodromy map, thereby extending the recent results of the paper of 
M.Bertola, C.Norton and the author from the case of holomorphic to meromorphic potentials with simple poles.

\end{abstract}

\maketitle

\section{Introduction}

\rm

The goal of this paper is to study symplectic aspects of monodromy map for second order linear  equation on a Riemann surface of genus $g$ with meromorphic potential 
having $n$  simple poles. We generalize results of the recent paper by M.Bertola, C.Norton and the author \cite{BKN}  where the case of holomorphic potential was treated in detail
(see also an earlier paper by S.Kawai  \cite{Kawai} for another approach to this problem).

The condition of coordinate invariance of the  equation $\phi''-U\phi=0$ on a Riemann surface $\Ccal$ of genus $g$ implies  that the function $-2U$
transforms  as a projective connection under a coordinate change while the solution $\phi$ locally transforms as $-1/2$-differential \cite{HawSch}.
Thus any meromorphic  potential $U$ with simple poles can be represented  as $-S_0/2+Q$ where $S_0$ is a 
fixed holomorphic projective connection on $\Ccal$ and $Q$ is a  meromorphic quadratic differential  with $n$ simple poles. This 
parameterization of the space of meromorphic potentials on a given Riemann surface is natural since meromorphic quadratic differentials with $n$ simple poles form the cotangent space $T^*\Mcal_{g,n}$ 
to moduli space of Riemann surfaces of genus $g$ with $n$ marked points. However, there remains a freedom in the choice of  the ``base'' holomorphic projective connection $S_0$ for a given Riemann surface. In this paper, following \cite{Kawai,BKN} we  assume that $S_0$ holomorphically depends on moduli of $\Ccal$ (in particular, this requirements rules out the use of Fuchsian projective connection as $S_0$). Furthermore, following \cite{BKN} we choose $S_0$ to be the  Bergman projective connection $S_B$ 
(up to the factor $1/6$ this projective connection is the zeroth
order term in the expansion of the canonical normalized meromorphic bidifferential $B(x,y)=d_x d_y \log E(x,y)$ near the diagonal $x=y$ \cite{Fay73}; $E(x,y)$ is the prime-form).
Therefore, we are going to study the equation written in the form
\be
\phi''+\left(\frac{1}{2}S_B+Q\right)\phi=0\;.
\la{Sint} 
\ee
The ratio $f=\phi_1/\phi_2$ of two linearly independent solutions of  (\ref{Sint}) solves the Schwarzian equation
$$\Scal(f,\xi)=S_B(\xi)+Q(\xi)\;,$$
 where $\xi$ is an arbitrary local parameter on $\Ccal$ and $\Scal$ denotes Schwarzian derivative.
The Schwarzian equation determines a $PSL(2,\C)$ monodromy representation of the fundamental group 
$\pi_1(\C\setminus\{y_i\}_{i=1}^n)$ which turns out to be liftable to an $SL(2,\C)$ representation \cite{GaKaMa}
(the lift to an $SL(2,C)$ representation is a non-trivial fact due to spinorial nature of solutions $\phi_{1,2}$).
Following \cite{BKN}, we define an $SL(2,\C)$ monodromy representation of equation (\ref{Sint}) directly, by an appropriate
change of dependent variable $\phi$.

 Denote the standard generators of the fundamental group by
$\a_1,\b_1,\dots,\a_g,\b_g,\ka_1,\dots,\ka_n$; these generators satisfy the relation $(\ka_1\dots\ka_n)\prod_{i=1}^g \a_i\b_i\a_i^{-1}b_i^{-1}=id$. Since all poles $y_i$ of the potential of equation (\ref{Sint}) are simple then  both eigenvalues of 
monodromy matrices $M_{\ka_i}$ equal to $\pm 1$ for each $i$;  all other monodromies are  $SL(2)$ matrices satisfying certain genericity conditions
\cite{GaKaMa}.

Our goal is to study  symplectic properties of the monodromy map for equation (\ref{Sint}). 
Since  monodromy matrices  depend on the choice of normalization point of solutions of (\ref{Sint}) it is natural to work with the corresponding  character variety
which we denote by  $\CV_{g,n}^0$; a point of  $\CV_{g,n}^0$ is an equivalence class of monodromy representations which differ by a simultaneous conjugation with the same matrix. The index $0$ indicates that the monodromies around poles $y_1,\dots,y_n$ are not generic: all of their eigenvalues are equal to $\pm 1$.

The main result of this paper states that the 
 canonical symplectic structure on $T^*\Mcal_{g,n}$
implies the Goldman bracket on the character variety $\CV^0_{g,n}$ under the monodromy map of equation  (\ref{Sint}), therefore generalizing the result of
\cite{BKN} to  potentials with simple poles. Moreover, similar to  \cite{BKN} and \cite{Kawai}, we prove that the same statement holds for
equation (\ref{Sint}) where the Bergman base projective connection is replaced either by Schottky or quasi-fuchsian (Bers) projective connection.

We follow the same strategy  as in \cite{BKN}; it is based on
the use of the abelian periods of the quadratic differential $Q$ as  coordinates on the underlying moduli space
$T^*\Mcal_{g,n}$. These coordinates are periods of the Abelian differential $v=\sqrt{Q}$ on the canonical covering 
$\Ch$ of $\Ccal$ (the genus of $\Ch$ equals $4g-3+n$). We call them ``abelian periods'' or simply ``periods'' to distinguish from 
 monodromy
matrices of equation (\ref{Sint}) which are also  called sometimes ``periods of quadratic differentials'' 
\cite{Tyurin1978}; following the terminology of \cite{Tyurin1978} the monodromy matrices  should probably be  rather  called ``non-abelian periods''. 
In the theory of dynamical systems the abelian periods of the quadratic differential $Q$ are known under the name of
``homological coordinates'' (see \cite{DouHub} and numerous recent papers; a substantial reference list can be found in \cite{EKZ}).

The  phase space associated to equation (\ref{Sint}) is the space of pairs (Riemann surface $\Ccal$ of
genus $g$, meromorphic quadratic differential $Q$ on $\Ccal$ with $n$ simple poles). 
This phase space is nothing but the cotangent bundle $T^*\Mcal_{g,n}$ (up to subspaces of codimension one and higher;
these subspaces contain differentials with multiple zeros). The natural symplectic structure 
on $T^*\Mcal_{g,n}$ is defined by  $\omega_{can}=\sum_{i=1}^{3g-3+n} \d p_i\wedge \d q_i$,
where $q_i$ are local coordinates on the moduli space $\Mcal_{g,n}$ while $p_i$ are corresponding momenta (coefficients in the decomposition of a
cotangent vector in the basis $\{\d q_i\}$). 

The set of holomorphic local coordinates $\{q_i\}$ on $\Mcal_{g,n}$
 can be chosen as follows. To determine locally the conformal structure of $\Ccal$ we pick (outside of hyperelliptic locus and for $g>1$) 
a set of $3g-3$ entries of the period matrix $\Omega$ of $\Ccal$; in different neighbourhoods of the moduli space these 
entries might have to be chosen differently.   The quadratic differentials corresponding to cotangent vectors $d\Omega_{jk}$ are products $v_j v_k$ of   normalized holomorphic differentials. 
An additional set of $n$ coordinates which determine the positions of punctures $\{y_i\}$ on $\Ccal$ we choose to be
$q_k=(v_i/v_j)(y_k)$ where $v_i$ and $v_j$ form a pair of   normalized holomorphic 1-forms on $\Ccal$ corresponding to some Torelli marking (these coordinates are also local: in different coordinate charts on $\Mcal_{g,n}$ one might need to choose another pair of normalized holomorphic differentials and/or  different Torelli markings). The quadratic differential corresponding
to cotangent vector $dq_k$ is the meromorphic quadratic differential (given by the formula (\ref{Qk}) below) whose only simple pole is at  $y_k$. The momenta $p_i$ are then defined to be coefficients of decomposition of an arbitrary meromorphic quadratic differential with
simple poles in the basis described above.

 An alternative set of Darboux coordinates on  $T^*\Mcal_{g,n}$ (more precisely, on the subset $\Qcal_{g,n}^0\subset T^*\Mcal_{g,n}$
which contains quadratic differentials with all simple zeros)  is given by abelian periods of the quadratic differential $Q$ which are 
defined as integrals of $v=\sqrt{Q}$ over odd part of homology group $H_1(\Ch,\R)$ of canonical covering $\Ch$. The canonical 
two-sheeted covering $\Ch$  is defined
by equation $v^2=Q$ in $T^*\Ccal$; the branch points of $\Ch$ lie at zeros and poles of $Q$ (in modern literature $\Ch$ is sometimes attributed to  Seiberg-Witten \cite{GaiTes} or Hitchin \cite{Mulase}, although this two-sheeted covering was extensively used starting from early days of Teichm\"uller theory \cite{Abikoff,DouHub}). The genus of  $\Ch$ equals $4g-3+n$; it admits a natural holomorphic involution which we denote by $\mu$.  The homology group $H_1(\Ch,\R)$ can be decomposed into direct sum $H_+\oplus H_-$ of even and odd subspaces under the action of $\mu$;
 the dimension of $H_+$ equals  $2g$ while $\dim H_-=6g-6+2n$. Choosing a symplectic basis $\{a^-_i,b^-_i\}$ in $H_-$  with the intersection matrix $a_i^-\circ b_j^-=\delta_{ij}/2$ we define periods of $Q$ by
\be
A_i=\int_{a_i^-}v\;,\hskip0.7cm B_i=\int_{b_i^-}v\;.
\ee
The intersection pairing in $H_-$ defines the natural symplectic form 
$$\omega=2\sum_{i=1}^{3g-3+n} \d A_i\wedge \d B_i$$
 which turns out to coincide 
with the canonical symplectic form $\omega_{can}$ on $T^*\M_{g,n}$ restricted to the space of quadratic differentials with simple zeros. 
Moreover, we show  that the function generating the transformation from canonical Darboux coordinates  $(p_i,q_i)$
to Darboux coordinates given by periods $(\sqrt{2}A_i,\sqrt{2}B_i)$ is given by
$$
G=\sum_{i=1}^{3g-3+n} A_i B_i
$$ 
which generalizes the formula obtained in \cite{BKN} to the case $n\neq 0$.

The symplectic form $\omega=\omega_{can}$ induces a symplectic structure on the character variety $\CV_{g,n}^0$ via the monodromy map of
equation (\ref{Sint}). We emphasize that this monodromy map essentially depends on the choice of the base projective connection.
The choice  of  Bergman projective connection $S_B$ as the base  is not unique since $S_B$ transforms non-trivially under the change of Torelli marking of $\Ccal$. The   Bers projective connection chosen as the base in \cite{Kawai} 
carries even more freedom, since it depends on a choice of a point in the Teichm\"uller space as a parameter.
Nevertheless, as it was shown in \cite{BKN}, the Poisson structure induced on the character variety is the same for the Bergman projective connection (independently of Torelli marking used) and Bers projective connection used in \cite{Kawai} (independently of the choice of the initial point in the Teichm\"uller space).

Technically, it is convenient to  work with the matrix first order equation constructed by
 introducing functions $\psi_{1,2}=\phi_{1,2}\sqrt{v}$ where $\phi_{1,2}$ are two linearly 
independent solutions of   (\ref{Sint}). Denote by $\Psi$ the Wronskian matrix of 
$\psi_1$ and $\psi_2$. The matrix $\Psi$ satisfies the first order matrix equation
\be
\d\Psi= \left(\ba{cc} 0 & v \\
uv & 0  \ea \right)\Psi\;,
\la{lsi}\ee
where the meromorphic function $u$ on $\Ccal$ is given by 
\be
u= -\f{S_B-S_v}{2Q}-1
\la{potfor}\ee
and $S_v(\xi)=\Scal(z(\xi),\xi)$ is the Schwarzian derivative of the  coordinate $z(x)=\int_{x_1}^x v$ with respect to
a local coordinate $\xi$. Notice that the coefficients $v$ and $uv$ in (\ref{lsi}) are (holomorphic and meromorphic respectively)
differentials on $\Ch$, not on $\Ccal$ itself.

The Poisson bracket between $u(z)$ and $u(\zeta)$ (assuming that 
the coordinates $z$ and $\zeta$ are independent of moduli) is given by the following expression:
\be
\f{4\pi i}{3}\{u(z), u(\zeta)\}
=\L_z\left[\int^z h(z,\zeta)dz\right]-
\L_\zeta\left[\int^\zeta h(z,\zeta)d\zeta\right]\;,
\la{pbpot}\ee
where the bimeromorphic function $h(z,\zeta)$ on $\Ccal\times \Ccal$  is given by
$h(z,\zeta)=\frac{B^2(z,\zeta)}{\qd(z)\qd(\zeta)}\;$; the differential operator
  $\L_z=\f{1}{2}\p_z^3-2u(z)\p_z-u_z(z)$  is known as the  ``Lenard's operator'' in the theory of
integrable systems. 

The computation of the Poisson bracket (\ref{pbpot}) from the fundamental Poisson bracket $\{A_i, B_j\}=\delta_{ij}/2$ 
(and, therefore, also from fundamental Poisson bracket on $T^*\Mcal_{g,n}$)
is based on variational formulas for the canonical bidifferential $B$ proved in \cite{JDG,contemp}.
The monodromy map for equation (\ref{lsi}) gives an $SL(2,\C)$ representation of $\pi_1(\Ccal\setminus \{y_i\}_{i=1}^n,x_0)$.
A technical computation originally performed in \cite{BKN} allows to find the Poisson bracket between traces of monodromy matrices of
equation (\ref{lsi}) along two  arbitrary loops $\g$ and $\gt$. The result is the Goldman's bracket
\cite{Gold84}:
\be
\{{\rm tr}M_\g,\;{\rm tr}M_\gt\}=\f{1}{2}\sum_{p\in \g\circ\gt}\nu(p)\left( {\rm tr}M_{\g_p \gt}- {\rm tr}M_{\g_p \gt^{-1}}\right)\;,
\la{Goldint}
\ee
where the monodromy matrices $M_\g,M_\gt\in SL(2,\C)$;  $\g_p \gt$ and $\g_p \gt^{-1}$ are  paths obtained by  
resolving the intersection point $p$ in two different ways (see \cite{Gold84}); $\nu(p)=\pm 1$ is the contribution of the point
$p$ to the intersection index of $\gamma$ and $\tilde{\gamma}$.

The paper is organized as follows.
In Section \ref{Cancov} we describe  main objects associated to the canonical two-sheeted covering. 
In Section \ref{Seceq} we define the $SL(2,\C)$ monodromy representation for equation (\ref{Sint}) via an appropriate
matrix reformulation. In section \ref{Varfo} we derive variational formulas for coefficients of the  matrix equation and for the monodromy matrices. In Section \ref{Canper} we show that the period coordinates are Darboux coordinates for the canonical symplectic structure on $T^*\Mcal_{g,n}$. In Section \ref{canGo} we outline the modifications which have to be made in the scheme of \cite{BKN} to 
cover the case of meromorphic differentials with first order poles. 

\section{Canonical covering of a Riemann surface}
\la{Cancov}

\subsection{Setup}

Denote the moduli space of  meromorphic quadratic differentials on Riemann surface of genus $g$ with $n$ simple poles and $4g-4+n$
simple zeros by $\Qcal^0_{g,n}$.
Here we list a few basic facts about the canonical covering of a Riemann surface determined by any quadratic differential
$Q\in \Qcal^0_{g,n}$. 
The presentation is parallel to the case of canonical covering defined by a holomorphic quadratic differential which was considered in detail in \cite{Leipzig,contemp,BKN}.

Let  $\Ccal$ be a Riemann surface of genus $g$. The  Torelli marking is a choice of canonical basis  $\{a_i,b_i\}_{i=1}^g$ in $H_1(\Ccal,\Z)$ 
with the intersection index $a_i\circ b_j=\delta_{ij}$. Let $\{v_i\}_{i=1}^g$ be the dual basis in $H^{(1,0)}(\Ccal)$
normalized by $\oint_{a_i}v_j=\delta_{ij}$; integrals of $v_i$ over $b$-cycles give the period matrix
$\Omega_{ij}=\oint_{b_i} v_j$. Introduce the canonical meromorphic bidifferential $B(x,y)=d_xd_y\log E(x,y)$, where $E$ is the prime-form (sometimes $B$ is called ``Bergman kernel'' following the paper by Hawley-Schiffer 
\cite{HawSch} of 1966,
although this bidifferential was   already used by Klein \cite{Klein}). The bidifferential $B(x,y)$ is symmetric, $B(x,y)=B(y,x)$ and its only singularity is on the diagonal: as $y\to x$
in a local coordinate $\xi$ one has
\be
B(x,y)=\left(\frac{1}{(\xi(x)-\xi(y))^2}+\frac{1}{6} S_B(\xi(x))+\dots\right)d\xi(x) d\xi(y)\;.
\la{singB}
\ee
Moreover, $B(x,y)$ is normalized by the requirement that all of its $a$-periods vanish with respect to each variable (its $b_i$-period with respect, say, to $y$-variable,
equals $ 2\pi i v_i(x)$). Therefore, $B(x,y)$ depends on Torelli marking of $\Ccal$; under the change of Torelli marking it transforms according to formula  given at page 21 of \cite{Fay73}. The term $S_B$ in (\ref{singB}) transforms as projective connection under a change of coordinate $\xi$;  it is
 called  the ``Bergman projective connection'' (this name is inherited from the ``Bergman kenel'' terminology adopted in  \cite{HawSch,Tyurin1978} although, probably, it would be  historically more appropriate to call it ``Klein's projective connection'').
The projective connection $S_B$ is important both in physics (it equals to the $(zz)$-component of the energy-momentum tensor of free bosons on a Riemann surface \cite{Sonoda}) and in mathematics (it describes variations of  a holomorphic section of determinant of Hodge vector bundle over various moduli spaces
\cite{MRL,Advances,contemp}, and also variations of the determinant of Laplacian on a Riemann surface \cite{TZ1,JDG}).

\subsection{Canonical covering} 

Let $Q\in \Qcal_{g,n}^0$ be a meromorphic quadratic differential on $\Ccal$ with $n$ simple poles (denoted by $y_1,\dots,y_n$) and $4g-4+n$ simple zeros
(denoted by $x_1,\dots, x_{4g-4+n}$). The canonical covering $\Ch$ is defined by  equation $v^2=Q$ in $T^* \Mcal_g$.
The two-sheeted covering $\pi:\Ch\to\Ccal$ is branched at all poles and zeros of $Q$; thus the total number of branch points is 
$4g-4+2n$ and the  genus of $\Ch$ equals $\gh=4g-3+n$. 
The Abelian differential $v$ is holomorphic on $\Ch$; it has zeros of order 2 (on $\Ch$!) at the  branch points $\{x_i\}_{i=1}^{4g-4+n}$ and no other zeros.
Denote the natural involution on $\Ch$ by $\mu$ and decompose the homology group $H_1(\Ch,\R)$  into direct sum of the even and odd subspaces $H_+\oplus H_-$, where 
 $\dim H_+=2g$ and $\dim H_-=6g-6+2n$. The holomorphic part of cohomology group $H^{(1,0)}(\Ch,\R)$ is similarly decomposed as $H^+\oplus H^-$ where $\dim H^+=g$ and
$\dim H^-=3g-3+n$. The differential $v$ belongs to $H^-$.

Introduce a set of generators of $H_-$ denoted by $\{a_i^-,b_i^-\}$ with the intersection index $a_i^-\circ b_j^-=\delta_{ij}/2$. The integrals of $v$ over $\{a_i^-,b_i^-\}$\;,
\be
A_i=\int_{a_i^-} v\;\hskip0.7cm B_i=\int_{b_i^-} v\;, \hskip0.7cm
i=1,\dots 3g-3+n\;,
\la{homco}
\ee
are called the (abelian) periods of the quadratic differential $Q$. 
 The periods can be used as local coordinates on the space $Q_{g,n}^0$ which are called ``homological coordinates'' in the theory of dynamical systems \cite{DouHub,EKZ}
\footnote{In physics literature these periods are denoted by $a_i$ and $a_i^D$ and are attributed to Seiberg and Witten, see for example 
\cite{GaiTes} and references therein.}.

Another way of looking at coordinates $(A_i,B_i)$ is to consider them as combinations of integrals of the differential $v$ between different branch points of $\Ch$.

The differential $v$ can be  used to introduce a special system of local coordinates on $\Ch$ and $\Ccal$. Namely, if $x$ is a point of $\Ch$ which does not coincide with 
branch points $\{x_i\}$ then the local parameter (the ``flat coordinate'') on $\Ch$ in a neighbourhood of $x$ is given by
\be
z(x)=\int_{x_1}^x v\;,
\la{defflat}
\ee
where $x_1$ is a chosen ``first'' zero of $v$.


Near branch points $\{x_i\}$ of $\Ch$ the local coordinates on $\Ch$ are given by (these coordinates are called ``distinguished''):
\be
\hat{\zeta}_i(x)=\left[\int_{x_i}^x v\right]^{1/3}\;,\hskip0.7cm i=1,\dots, 3g-3+n\;;
\ee
near $y_k$ the distinguished local coordinate is
$$
\hat{\xi}_k(x)=z(x)-z(y_k)=\int_{y_k}^x v,\hskip0.7cm i=1,\dots, n\;.
$$
The ``flat'' local coordinates on $\Ccal$ near every point except zeros or poles of $Q$ are the same as on $\Ch$ (also defined up to a sign).

On the base curve $\Ccal$  the distinguished local coordinate near point $x_i$ is given by
\be
\zeta_i(x)=\left[\int_{x_i}^x v\right]^{2/3}\;,
\la{loccox}
\ee
and near $y_k$:
\be
\xi_k(x)=\left[\int_{y_k}^x v\right]^{2}\;.
\la{loccoy}
\ee

To work with the canonical covering $\Ch$ it is convenient (following Lemma 3.1 of \cite{BKN}) to choose a special 
set of generators 
 $(\{\a_i,\b_i\}_{i=1}^g$, $\gamma_{x_1},\dots,\gamma_{x_{4g-4+n}},\gamma_{y_1},\dots\gamma_{y_n})$
of the fundamental group
$\pi_1(\Ccal\setminus\{x_i,y_i\})$ 
which satisfy the relation
\be
\left( \prod_{i=1}^{4g-4+n}\gamma_{x_i}\right)  \left(  \prod_{i=1}^n  \gamma_{y_i}    \right)\prod_{i=1}^g\a_i\b_i\a_i^{-1}\b_i^{-1}=id\;.
\la{relhomgr}
\ee

The topology of the covering $\Ch$ determines a homomorphism $h$ of $\pi_1(\C\setminus\{x_i,y_i\})$ to the symmetric group $S_2$.
In parallel  to  Lemma 3.1 of \cite{BKN} one can prove that the generators $\a,\b,\g$ can be 
chosen in such a way that 
\be
h(\a_i)=h(\b_i)=(21)\;\;\hskip0.7cm i=1,\dots, g\;,
\la{hab}
\ee
while
\be
h(\g_{x_i})=h(\g_{y_j})=id\;,\hskip0.7cm i=1,\dots,4g-4+n\;,\hskip0.7cm j=1,\dots,n\;.
\la{hgg}
\ee
Then one can cut $\Ch$ along loops representing generators $\a_i$ and $\b_i$ on both sheets of the covering; in this way one gets the two-sheeted covering 
$\Ch_0$ of the fundamental polygon $\Ccal_0$ of $\Ccal$ branched at $4g-4+2n$ points $\{x_i\}$ and $\{y_i\}$.

It is convenient to connect the branch points by the following system of branch cuts:
\be
[x_i,y_i]\;,\hskip0.7cm  i=1,\dots,n\;,
\ee
\be
[x_{n+2i-1},x_{n+2i}]\;,\hskip0.7cm  i=1,\dots,2g-2\;.
\ee

Under this choice of branch cuts $\Ch_0$ becomes a two-sheeted branched covering of the fundamental polygon
$\Ccal_0$  with branch cuts chosen as shown in Fig.\ref{Chat0}.
\begin{figure}
\centering
\includegraphics[width=90mm]{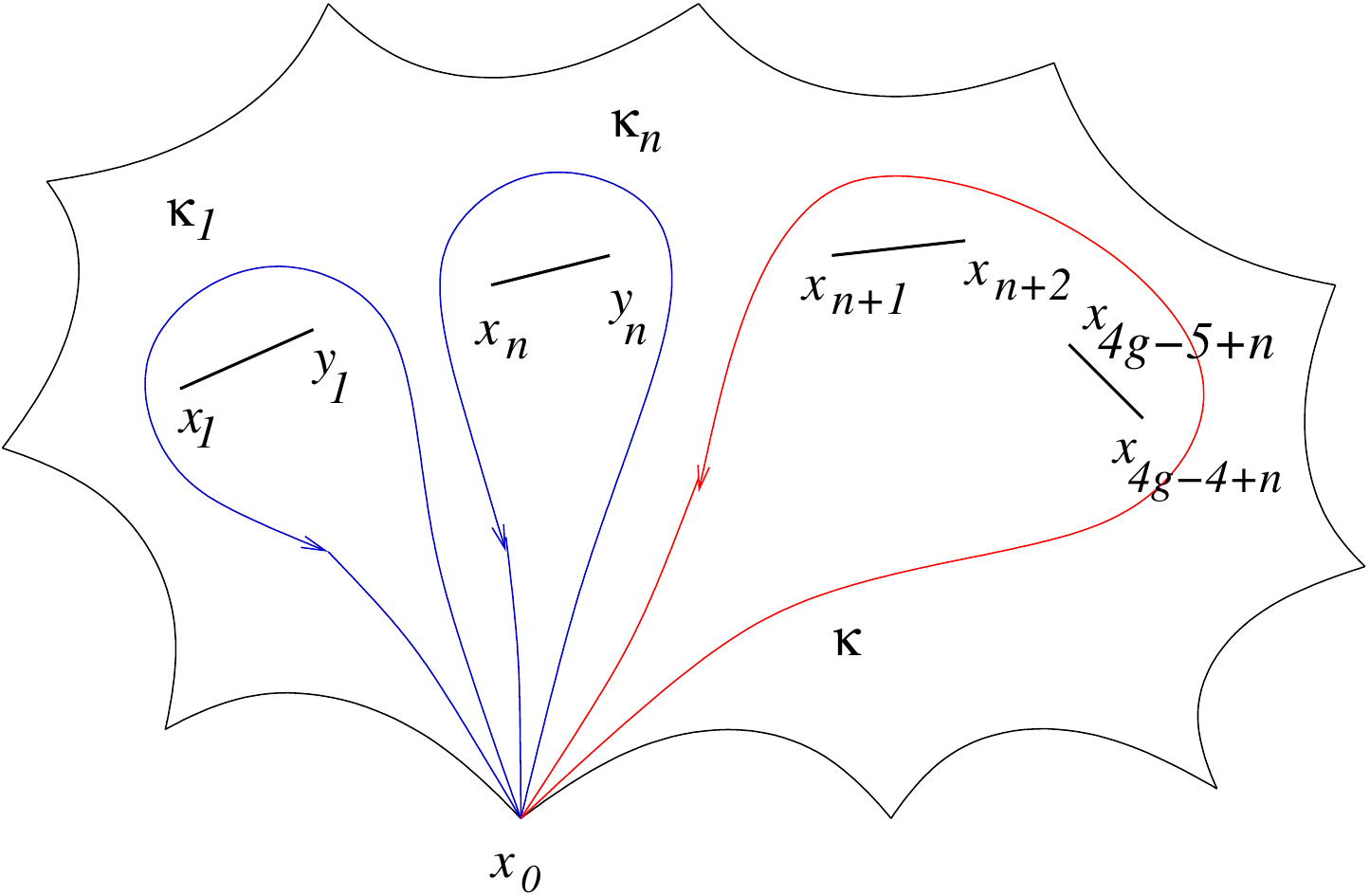}

\caption{Fundamental polygon $\Ccal_0$ with loops $\kappa_i$ and  $\kappa$.}
\la{Chat0}
\end{figure}

\hskip3.0cm

Denote by $x_0$ the corner of $\Ccal_0$. To study the monodromy group of the  equation (\ref{Sint}) 
we are going to introduce a system of $n$ loops $\kappa_1,\dots,\kappa_n$   on $\Ch_0$ (Fig.\ref{Chat0}) such that the loop $\kappa_i$ goes around the branch cut $[x_i,\,y_i]$.
Projections of these loops to $\Ccal_0$ will be denoted by the same letters.

\section{Second order  equation with meromorphic potential on a Riemann surface}

\label{Seceq}

We are going to write the second order linear equation on a Riemann surface (the "Schr\"odinger equation") in the form 
\be
\label{Schrint1}
\varphi'' +\le(\frac{1}{2}  S_B+ \qd\ri)\varphi =0 \;,
\ee 
where the derivative is taken with respect to some local coordinate $\xi$ on $\Ccal$.
The  solution $\varphi$ of (\ref{Schrint1}) is locally a $-1/2$-differential which we write as  $\phi=\phi(\xi)(d\xi)^{-1/2}$.
Choosing two linearly independent solutions of (\ref{Schrint1})  construct the Wronskian matrix 
\be
\Phi(\xi)=\left(\ba{cc}\phi_1 & \phi_2 \\
{\phi_1}_\xi       & {\phi_2}_\xi \ea\right)\;.
\la{Phidef}
\ee 
This matrix  satisfies the equation 
\be
\frac{\d\Phi}{\d\xi}= \left(\ba{cc} 0 & 1 \\ -\frac{1}{2}  S_B(\xi)- \qd(\xi) & 0 \ea\right)\Phi\;,
\la{matrixPhi}
\ee
where we use the notation $S_B=S_B(\xi) (d\xi)^2$ and  $Q=Q(\xi) (d\xi)^2$.

Let the quadratic differential $\qd$ have $n$ simple poles $\{y_i\}_{i=1}^n$ and $4g-4+n$ simple zeros  $\{x_i\}_{i=1}^{4g-4+n}$. Introduce the canonical cover $\Ch$ by
the equation $v^2=\qd$.  The zeros of the Abelian differential $v$ on $\Ch$ all have multiplicity $2$, 
and therefore $v^{1/2}$ is a section (holomorphic on $\Ch$ and
unique up to a sign)  of a spin 
line bundle over $\Ch$. 
Following \cite{BKN} we define 
\be
\psi(\xi):=\phi(\xi) v^{1/2}(\xi)\;.
\la{psiphi}
\ee
Then the scalar equation (\ref{Schrint1}) takes the following form  in terms of  $\psi$:
\be
\d\left(\f{\d\psi}{v}\right)-uv\psi=0\;,
\label{inv2}
\ee
where $u$ is the meromorphic  function on $\Ccal$ given by 
\be
u= -\f{S_B-S_v}{2v^2}-1\;.
\la{defqv}
\ee
and $S_v$ is the  meromorphic projective connection on $\Ccal$  given by the Schwarzian derivative of the flat coordinate $z$:
\be
S_v(\xi)=\Scal\left(\int_{x_1}^x v, \xi\right)=\left(\frac{v'}{v}\right)'-\frac{1}{2}\left(\frac{v'}{v}\right)^2\;.
\la{Svdef}
\ee

Choosing two linearly independent 
solutions $\psi_{1,2}$ of (\ref{inv2}) we construct the Wronskian matrix $\Psi$ which is related to
the matrix $\Phi$ (\ref{Phidef}) as follows
\be
\Psi=\left(\ba{cc}\psi_1 & \psi_2 \\
d\psi_1/v       & d\psi_2/v \ea\right)\equiv 
 v^{1/2}(\xi)\left[I +\frac{v_\xi}{2v}\left(\ba{cc} 0 & 1 \\
0   & 0 \ea\right)\right]\Phi\;.
\la{defPsi}
\ee
The matrix $\Psi$ satisfies the equation
\be
\d\Psi= \left(\ba{cc} 0 & v \\
uv & 0  \ea \right)\Psi\;.
\la{matreq}
\ee
Matrix entries $v$ and $uv$ of the coefficient matrix of (\ref{matreq}) are differentials on $\Ch$ ($v$ is holomorphic while $uv$ is meromorphic) which are anti-symmetric under the
involution $\mu$.

The advantage of the matrix $\Psi$ over the original Wronskian matrix $\Phi$ is that the matrix
entries of the  former are functions i.e. they 
are independent of the choice of the local parameter $\xi$ while the matrix entries of the latter
non-trivially depend on the choice of $\xi$.
The price which is paid for this independence is the appearance of the canonical covering $\Ch$ 
in the  equation satisfied by $\Psi$. Therefore, $\Psi$ gets additional monodromies $\pm i$ around zeros $x_i$
($\Phi$ is  monodromy-free around $x_i$). The same factors $\pm i$ appear when comparing monodromies 
of (\ref{matrixPhi}) and (\ref{matreq}) around poles $y_i$.

Another way to rewrite equation (\ref{matreq}) is to use the flat coordinate $z=\int_{x_1}^x v$
on $\Ch$ ($z$ can be used as a local coordinate on $\Ccal$ and $\Ch$ outside of neighbourhoods of points $x_i$ and $y_k$) . Then $v=dz$ and the scalar equation (\ref{inv2})    can be written as 
\be
\psi_{zz}+u\psi=0\;.
\la{linearz}
\ee
The matrix equation  (\ref{matreq}) now takes the form
\be
 \f{d\Psi}{d z}= \left(\ba{cc} 0 & 1 \\
u(z) & 0  \ea \right)\Psi\;.
\la{Psiz}
\ee

\subsection{ Monodromy representation}
\la{mondef}

There are various ways to associate a monodromy representation of
the fundamental group $\pi_1(\Ccal\setminus \{y_i\}_{i=1}^{n},x_0)$ 
to  equations (\ref{Schrint1}), (\ref{matrixPhi}) and (\ref{matreq}).
Due to the sign ambiguity of the spinors $\phi_{1,2}$ the equation for $\Phi$ (\ref{matrixPhi}) 
 determines only a $PSL(2,\C)$ monodromy representation which coincides with the 
monodromy representation of the Schwarzian equation for function $f=\phi_1/\phi_2$:
\be
\Scal (f,\xi)=S_B+2Q\;.
\la{Scheq}
\ee 
It is a non-trivial problem whether or not this representation can be lifted to an $SL(2,\C)$ representation (see \cite{GaKaMa}). 

On the other hand, if we write down the equation in terms of matrix $\Psi$ (\ref{matreq}) an
$SL(2,\C)$ representation can be naturally constructed (as shown in \cite{BKN} for $n=0$).

Let us show how an  $SL(2,\C)$ monodromy representation of 
$\pi_1(\Ccal\setminus \{y_i\}_{i=1}^{n},x_0)$ can be associated to equation (\ref{matreq}) for
$n>0$. The subtlety is in the fact that both differentials, $v$ and $uv$, forming 
the matrix of coefficients of the system, are differentials on $\Ch$, not on $\Ccal$.
However a natural $SL(2,\C)$ representation of the fundamental group of $\Ccal\setminus\{y_k\}_{k=1}^n$  can be defined as follows.

Assume that the generators of the fundamental group $\pi_1(\Ccal\setminus\{x_i,y_i\})$ are chosen according to (\ref{hab}), (\ref{hgg})
and consider the first sheet  $\Ch_0^{(1)}$ of the two-sheeted covering $\Ch_0$ of the fundamental polygon $\Ccal_0$ (Fig.\ref{Chat0}).
As before, the loops $\kappa_i$, $i=1,\dots,n$  are chosen to encircle the branch cuts $[x_i,y_i]$; introduce also the loop $\kappa$
excircling all the remaining  branch cuts $[x_{i},x_{i+1}]$ for $i=n+1,\dots,4g-4+n-1$. 

Since the coefficients of equation (\ref{matreq}) are single-valued in the multiply-connected domain obtained by deleting all branch cuts from $\Ch_0^{(1)}$,
one can define monodromy matrices along the loops $\a_i$, $\b_i$, $\kappa_i$ and $\kappa$ corresonding to initial point $x_0$.

The monodromy matrix along $\kappa$ equals $I$ 
since (see Lemma 6.2 of \cite{BKN}) the monodromy around each $x_i$ arises only due to the factor of $v^{1/2}$ relating matrices $\Phi$ (which is monodromy-free
around $x_i$) and $\Psi$; this monodromy equals $i$. Since the number of zeros encircled by $\kappa$ is a multiple of 4, the  total monodromy 
of $\Psi$ along $\kappa$ equals $I$.

Therefore the  monodromies along remaining loops $\kappa_i$, $\{\alpha_i\}$, $\{\beta_i\}$ satisfy the required relation
\be
M_{\kappa_n}\dots M_{\kappa_1}\prod_{i=g}^1 M_{a_i} M_{\b_i} M_{\a_i}^{-1}M_{\b_i}^{-1}=I\;,
\la{relmon}
\ee
i.e. these monodromies form an anti-representation of the fundamental group of $\Ccal\setminus \{[x_i,y_i]\}_{i=1}^n$.
Since  the monodromies of $\Psi$ around $x_i$  arising from elementary transformation between $\Phi$ and $\Psi$ are equal to $i I$
we shall {\it define} the monodromies of equation (\ref{Schrint1}) around poles $y_i$ to be matrices $M_{\kappa_i}$. These matrices obviously correspond to
$PSL(2,\C)$ monodromies of the associate Schwarzian equation
\be
\Scal(f,\xi)=S_B(\xi)+Q(\xi)\;.
\la{Schweq}\ee

Therefore, we get the following 
\begin{proposition}
The  constructed $SL(2,\C)$ monodromy representation of matrix equation (\ref{matreq}) is a lift of the $PSL(2,\C)$ monodromy representation of the 
Schwarzian equation (\ref{Schweq}).
\end{proposition}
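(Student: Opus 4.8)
The plan is to show that, after projecting to $PSL(2,\C)$, the $SL(2,\C)$ (anti-)representation $\rho_\Psi$ of $\pi_1(\Ccal\setminus\{y_i\}_{i=1}^n,x_0)$ generated by the monodromy matrices $M_{\kappa_i},M_{\alpha_i},M_{\beta_i}$ of $\Psi$ — which is well defined by the discussion preceding the statement, in particular by $M_\kappa=I$ and the relation (\ref{relmon}) — one recovers the $PSL(2,\C)$ monodromy representation $\rho_{\mathrm{Sch}}$ of the Schwarzian equation (\ref{Schweq}). Let $\pi\colon SL(2,\C)\to PSL(2,\C)$ be the quotient map, regarded as the action of a matrix on the ratio of a pair. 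Since $\pi\circ\rho_\Psi$ and $\rho_{\mathrm{Sch}}$ are both (anti-)homomorphisms, it suffices to check $\pi(M_\gamma)=\rho_{\mathrm{Sch}}(\gamma)$ as $\gamma$ runs over the generators $\kappa_i,\alpha_i,\beta_i$ (the projections to $\Ccal\setminus\{y_i\}$ of the loops used on $\Ch_0^{(1)}$).

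The one input I would actually use is (\ref{psiphi}): $\psi_{1,2}=\phi_{1,2}v^{1/2}$. Consequently the ratio of the first-row entries of $\Psi$ equals that of $\Phi$, namely $f:=\psi_1/\psi_2=\phi_1/\phi_2$, which is precisely the multivalued solution of (\ref{Schweq}) on $\Ccal\setminus\{y_i\}$ whose monodromy is $\rho_{\mathrm{Sch}}$ by definition. Now analytic continuation of $\Psi$ along a generator $\gamma$ multiplies $\Psi$ by the constant matrix $M_\gamma$, hence the first row $(\psi_1,\psi_2)$ transforms accordingly and $f$ is sent to its image under the Möbius transformation $\pi(M_\gamma)$. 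On the other hand, continuing $f=\phi_1/\phi_2$ along $\gamma$ produces, by the very definition of $\rho_{\mathrm{Sch}}$, the Möbius transformation $\rho_{\mathrm{Sch}}(\gamma)$. Comparing the two descriptions of how $f$ transforms along $\gamma$ gives $\pi(M_\gamma)=\rho_{\mathrm{Sch}}(\gamma)$ on generators, i.e. $\pi\circ\rho_\Psi=\rho_{\mathrm{Sch}}$, which is the assertion.

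I expect the only genuinely delicate point to be the one already settled before the statement, and which re-enters the argument above only through the well-definedness of $\rho_\Psi$: that $\gamma\mapsto M_\gamma$ really lands in $SL(2,\C)$ and respects (\ref{relmon}), despite the extra $\pm i$ monodromies that $\Psi$ acquires, relative to $\Phi$, around the branch points of $\Ch$. This is the bookkeeping reproducing Lemma~6.2 of \cite{BKN}: along $\kappa_i$ the scalar factor $v^{1/2}$ contributes $i$ around the zero $x_i$ and $-i$ around the simple pole $y_i$, and these cancel because the cut $[x_i,y_i]$ pairs $x_i$ with $y_i$ while $\Phi$ is monodromy-free around $x_i$; along $\alpha_i,\beta_i$ one uses the choice (\ref{hab})--(\ref{hgg}) of generators together with $M_\kappa=I$. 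Once this is granted the identification of the two representations is immediate, and the only feature absent from the holomorphic case of \cite{BKN} is the presence of the poles $y_i$, accommodated exactly by pairing each pole with a zero via the cut $[x_i,y_i]$.
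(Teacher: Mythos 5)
Your proposal is correct and follows essentially the same route as the paper: the substantive work is the construction preceding the statement (the $\pm i$ bookkeeping for $v^{1/2}$ around $x_i$ and $y_i$, $M_\kappa=I$, and the relation (\ref{relmon})), after which the paper simply declares the identification with the Schwarzian monodromies ``obvious''. You make that last step explicit by observing that $f=\psi_1/\psi_2=\phi_1/\phi_2$ is unaffected by the scalar factor $v^{1/2}$, so $\pi(M_\gamma)$ acts on $f$ exactly as the Schwarzian monodromy does — a worthwhile clarification, but not a different argument.
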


The equation (\ref{Sint}) can be written in terms of  a local coordinate $\xi$ near a pole $y_i$ in the form
$$
\varphi_{\xi\xi}+\left(\frac{C}{\xi}+ O(1)\right)\varphi=0\;.
$$ 
The local analysis shows that the monodromy of $\Phi$ around $y_i$ has two coinciding eigenvalues equal to $1$.
Due to additional factors of $-i$ and $i$ arising from monodromies of $v^{1/2}$ around $y_i$ and $x_i$, respectively,  the eigenvalues 
of monodromy $M_{\ka_i}$ of $\Psi$ along $\ka_i$ are also coinciding and equal to $+1$.

\section{Variational formulas}
\la{Varfo}

\subsection{Variational formulas on $Q_{g,n}^0$}

The variational formulas on the moduli space $Q_{g,n}^0$ describe the dependence of the period matrix $\Omega$, normalized differentials $v_i$ and the canonical bidifferential $B$ on moduli, which in present setting are given by periods $(A_i,B_i)$. To write down these formulas in our present setting we introduce a set
of generators $s_1\dots,s_{6g-6+2n}$ in $H_-$ (for example one can  choose this set to coincide with the set $\{a_i^-,b_i^-\}$) and introduce the periods  $\Pcal_i=\int_{s_i}v$.
Denote by $s_1^*,\dots,s_{6g-6+2n}^*$ the set of  generators dual to $\{s_i\}$ with intersection index $s_i^*\circ s_j=\delta_{ij}$; the set of dual periods corresponding to $\{a_i^-,b_i^-\}$
is given by $\{-2b_i^-,2a_i^-\}$.

The variational formulas on the space $Q_{g,n}^0$ given below can be obtained by reduction of the variational formulas on spaces of holomorphic 
 Abelian differentials  \cite{JDG} to a subspace consisting of Riemann surfaces admitting a holomorphic 
 involution. These formulas are only a slight modification of variational formulas 
on spaces of holomorphic quadratic differentials \cite{contemp,BKN}. 
Let us introduce $g$ functions on $\Ch$:
\be
f_i(x)=\frac{v_i(x)}{v(x)}
\la{deffi}
\ee
and the function of two variables on $\Ch$
\be
b(x,y)=\f{B(x,y)}{v(x)v(y)}\;.
\la{defbxy}
\ee
The variational formula for the  period matrix $\Omega$ looks as follows:
\be
\frac{\p \Omega_{jk}}{\p \P_{s_i}} = \f{1}{2} \int_{s_i^*} f_j f_k v\;.
\label{varO}
\ee
For the holomorphic normalized differentials and for  the canonical bidifferential one has:
\be
\f{\p f_j(x)}{\p \P_{s_i}} = \f{1}{2}\int_{ s_i^*} f_j(t)\, b(t,y)v(t)
\label{varV}
\ee
and
\be
\f{\p b (x,y)}{\p \P_{s_i}} = \f{1}{4\pi i}\int_{ s_i^*} b(x,t)\, b(t,y)v(t)\;,
\label{varB}
\ee
 where $\P_{s_i} := \oint_{s_i} v$ and all derivatives are computed keeping $z(x)=\int_{x_1}^xv$ and $z(y)=\int_{x_1}^y v$ constant. The fundamental polygon of $\Ch$ used to define $z(x)$ and $z(y)$ 
must be invariant under the involtion $\mu$, such that the  coordinate $z(x)$ satisfies the relation $z(x^\mu)=-z(x)$ for $x$ lying inside of the fundamental polygon.

Notice that, although the integrand in the right-hand side of (\ref{varB}) has poles of second order at the branch points $x_i$, its residues at these points vanish since 
this integrand is anti-symmetric under $\mu$.

Integrating (\ref{varB}) with respect to variable $y$ between any two points $p_1$ and $p_2$ along a contour $l$ one gets a 
variational formula for the normalized (integrals over all $a$-periods not intersecting contour $l$ vanish) differential $W_{p_1 p_2}$ of third kind with poles at $p_{1,2}$ and residues $\pm 1$:
\be
\f{\p W_{p_1,p_2} (x)}{\p \P_{s_i}} = \f{1}{4\pi i}\int_{ s_i^*}  W_{p_1,p_2} (t)\, b(t,x)v(t)\;.
\label{varW}
\ee

Taking the limit $y\to x$ in the formula (\ref{varB})  one gets  the  variational formula for the potential $u=-\f{1}{2}\frac{S_B-S_v}{Q}-1$ (also at $z(x)$  constant):
\be
\frac{\p u(x)}{\p \P_{s_i}}=-\f{3}{8\pi i} \int_{ s_i^*} h(x,t)v(t)\;,
\label{varQ}
\ee
 where 
\be
h(x,t)=\frac{B^2(x,t)}{Q(x)Q(y)}\;.
\la{defh}
\ee
The differential $h(x,t)v(t)$ has poles of order 4 on $\Ch$ at $t=x$ and $t=x^\mu$ with non-trivial residues. Therefore, the choice of the 
class of the integration path in $H_1(\Ch\setminus \{x,x^\mu\})$ in the right-hand side of (\ref{varQ}) is important; this choice must be made in such a way that
the integration goes along parts of the boundary  of the fundamental polygon of $\Ch$; the fundamental polygon should coincide with
the one  used to define the coordinate $z(x)$ which is kept fixed in the left-hand side of (\ref{varQ}).

\subsection{Variational formulas for solution and monodromies of the second  order equation}

Let $\Psi(z)$ be solution of the
matrix equation (\ref{Psiz}) normalized at the point $z_0=z(x_0)$ via $\Psi(z_0)=I$. 

Let us   introduce the following auxiliary matrix:
\be
\Ld(x)=\Psi^{-1}(x) \sigma_- \Psi(x)\;,
\label{lambdam}
\ee
where
$
\sigma_-= \left(\ba{cc} 0 & 0 \\
1 & 0  \ea \right)$.  The matrix $\Ld$ satisfies the third order equation \cite{BKN}
\be
\Ld_{zzz}-4 u(z) \Ld_z-2 u_z \Ld=0\;.
\label{3rdor}
\ee
The components of the matrix $\Lambda$ are given by $\psi_1^2,\psi_2^2$ and $\psi_1\psi_2$; therefore (\ref{3rdor}) is nothing but a well-known third order equation for 
products of two solutions of second order   equation (\ref{linearz}).

Dependence of $\Psi(x)$ (for fixed $z(x)$) on moduli is given by the following proposition:
\begin{proposition}
The following variational formula holds for solution of (\ref{matreq}) normalized by
 $\Psi(x_0)=I$:
\be
\Psi^{-1}\frac{\p\Psi(x)}{\p\Pcal_{s_i}}\Big|_{z(x)=const}= -\frac{3}{8\pi i}\int_{x_0}^x \Lambda(x)\left[\int_{t\in s_i^*} h(x,t)v(t)\right]
\la{varPsi}
\ee
for any $s_i\in H_-$. The integration path $s_i^*$ in the r.h.s. of (\ref{varPsi}) goes along the boundary of the fundamental polygon of $\Ch$
used to define the Abelian integral $z(x)$. 
\end{proposition}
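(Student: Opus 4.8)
The plan is to derive the variational formula for $\Psi$ from the variational formula (\ref{varQ}) for the potential $u$, using the standard fact that the derivative of the solution of a linear ODE with respect to a parameter is given by a convolution of the fundamental solution against the derivative of the coefficient matrix. Concretely, let $\partial_i$ denote $\partial/\partial\Pcal_{s_i}$ taken at $z(x)$ fixed. Differentiating the matrix equation (\ref{Psiz}), $\Psi_z=\bigl(\begin{smallmatrix}0&1\\u&0\end{smallmatrix}\bigr)\Psi$, with respect to the modulus $\Pcal_{s_i}$ — which is legitimate because the coordinate $z$ is held fixed, so only $u=u(z)$ varies — gives $(\partial_i\Psi)_z=\bigl(\begin{smallmatrix}0&1\\u&0\end{smallmatrix}\bigr)\partial_i\Psi+\bigl(\begin{smallmatrix}0&0\\\partial_i u&0\end{smallmatrix}\bigr)\Psi$, i.e. $(\partial_i\Psi)_z=\bigl(\begin{smallmatrix}0&1\\u&0\end{smallmatrix}\bigr)\partial_i\Psi+(\partial_i u)\,\sigma_-\,\Psi$. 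This is an inhomogeneous linear system for $\partial_i\Psi$ with the same homogeneous part as (\ref{Psiz}); since $\Psi(x_0)=I$ for all moduli we have $\partial_i\Psi(x_0)=0$, so variation of parameters yields $\partial_i\Psi(x)=\Psi(x)\int_{x_0}^x\Psi^{-1}(t)\,(\partial_i u(t))\,\sigma_-\,\Psi(t)$, i.e.
\be
\Psi^{-1}(x)\,\partial_i\Psi(x)=\int_{x_0}^x (\partial_i u(t))\,\Psi^{-1}(t)\sigma_-\Psi(t)=\int_{x_0}^x (\partial_i u(t))\,\Ld(t)\;,
\nn
\ee
using the definition (\ref{lambdam}) of $\Ld$.

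Next I would substitute the variational formula (\ref{varQ}), $\partial_i u(t)=-\tfrac{3}{8\pi i}\int_{r\in s_i^*}h(t,r)v(r)$, into the integrand. This immediately produces the claimed expression
\be
\Psi^{-1}(x)\,\partial_i\Psi(x)=-\frac{3}{8\pi i}\int_{x_0}^x\Ld(t)\left[\int_{r\in s_i^*}h(t,r)v(r)\right]\;,
\nn
\ee
which is (\ref{varPsi}) (with the mild abuse of writing $\Lambda(x)$ for the integration variable along the path from $x_0$ to $x$). The prescription that the inner integration contour $s_i^*$ be taken along the boundary of the fundamental polygon of $\Ch$ is simply inherited verbatim from (\ref{varQ}), where it is already imposed; nothing new needs to be checked there beyond noting that the outer path from $x_0$ to $x$ and the inner cycle $s_i^*$ can be chosen disjoint, which is possible since $s_i^*$ is a cycle and the endpoints of the outer path are fixed regular points.

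The main subtlety — and the point I would be most careful about — is justifying that one may differentiate equation (\ref{Psiz}) under the stated normalization "$z(x)=\mathrm{const}$." This is exactly the delicate issue flagged throughout the variational-formula discussion: the flat coordinate $z(x)=\int_{x_1}^x v$ itself depends on moduli, and $v=dz$ only after this identification, so "holding $z(x)$ fixed" means differentiating along a specific connection on the moduli space, the same one under which (\ref{varQ}) was derived. I would argue that since (\ref{varQ}) is obtained from the variational formulas (\ref{varB})–(\ref{varW}) for $B$ and $W$ under precisely this convention (derivatives at $z(x)$ and $z(y)$ constant, with the fundamental polygon of $\Ch$ chosen $\mu$-invariant), the potential $u=-\tfrac12\tfrac{S_B-S_v}{Q}-1$ expressed in the coordinate $z$ has well-defined $z$-constant moduli derivatives, and the ODE (\ref{Psiz}) in the variable $z$ has coefficient matrix depending on moduli only through $u(z)$ — hence the naive differentiation above is valid. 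The compatibility of the two normalization conventions (the one for $\partial_i u$ and the one for $\partial_i\Psi$) is what makes the final formula clean, and I would state explicitly that this is why the same $h(x,t)v(t)$ and the same contour prescription appear on both sides. Once that point is granted, the proof is the two-line variation-of-parameters computation above followed by substitution; no further estimates are needed.
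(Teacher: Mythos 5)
Your proof is correct and follows exactly the route the paper indicates: the paper's entire proof is the single remark that the formula is ``a direct application of the variation of parameters formula for the non-homogeneous linear equation using the variational formula (\ref{varQ}) for the coefficient of the equation,'' which is precisely your two-step argument (differentiate (\ref{Psiz}) at fixed $z$, solve the resulting inhomogeneous system with $\partial_i\Psi(x_0)=0$ by variation of parameters, then substitute (\ref{varQ})). Your additional care about the ``$z(x)=\mathrm{const}$'' convention and the contour prescription inherited from (\ref{varQ}) is consistent with, and somewhat more explicit than, what the paper records.
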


The proof of this proposition is a direct application of the variation of parameters formula for the non-homogeneus linear equation using the variational formula (\ref{varQ}) for the coefficient of the equation.

An immediate corollary of the formula (\ref{varPsi}) is the following variational formula for monodromy matrices.
\begin{corollary}
Let $\gamma$ be any element of $\pi_1(\Ccal\setminus\{y_i\}_{i=1}^n, x_0)$. Denote 
by $M_\gamma$ the monodromy matrix of equation  (\ref{matreq}) along $\gamma$ (as defined in Section \ref{mondef}). Then the following variational formulas hold:
\be
M_\g^{-1}\frac{\p M_\g}{\p\Pcal_{s_i}}=-\frac{3}{8\pi i}\int_{\g} \Lambda(x)\left[\int_{t\in s_i^*} h(x,t)v(t)\right]\;.
\la{varMgam}
\ee

\end{corollary}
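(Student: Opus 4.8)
The plan is to deduce the monodromy variational formula \eqref{varMgam} directly from the solution variational formula \eqref{varPsi} of the preceding proposition, treating the latter as given. The starting observation is that if $\gamma$ is a loop based at $x_0$ that avoids the punctures $\{y_i\}$ (and, on the canonical cover, the branch cuts over which $\Psi$ has its spurious $\pm i$ monodromies), then $M_\gamma = \Psi^{-1}(x_0)\,\Psi(\gamma\cdot x_0)$, where $\Psi(\gamma\cdot x_0)$ denotes the analytic continuation of $\Psi$ along $\gamma$ back to the base point; since $\Psi(x_0)=I$ this is just $M_\gamma=\Psi(\gamma\cdot x_0)$. The point of keeping the flat coordinate $z(x)$ fixed in \eqref{varPsi} is precisely that the endpoints of the continuation path are then moduli-independent, so one may differentiate under the continuation.

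Concretely, first I would fix a fundamental polygon of $\Ch$ invariant under $\mu$, write $\Psi$ as the product of the elementary ``transition'' matrices across the generators $\alpha_i,\beta_i,\kappa_i$ as in Section~\ref{mondef}, and note that $M_\gamma$ is a word in these. Then I would apply $\partial/\partial\Pcal_{s_i}$ at fixed $z$: by \eqref{varPsi} applied along each segment, and using the cocycle/chain rule $\Psi^{-1}\partial\Psi$ is additive along concatenation of paths (because $\Psi(x)=\Psi(x_0\to x)$ and $\partial(\Psi^{-1}\partial\Psi)$ of a product telescopes), one gets
\be
M_\gamma^{-1}\frac{\p M_\gamma}{\p\Pcal_{s_i}} = -\frac{3}{8\pi i}\int_{x_0}^{\gamma\cdot x_0}\Ld(x)\left[\int_{t\in s_i^*} h(x,t)v(t)\right],
\ee
where the integration in $x$ runs along $\gamma$. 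Here $\Ld(x)=\Psi^{-1}(x)\sigma_-\Psi(x)$ is the same matrix appearing in \eqref{lambdam}; the identity $\Psi^{-1}\bigl(\int\Psi^{-1}\sigma_-\Psi\,\omega\bigr)\Psi$ transforms the ``variation of parameters'' kernel $\Psi^{-1}(x)\sigma_-\Psi(x)$ into the conjugated form that makes the integrand depend only on the path class, not on the base point, so the integral over the closed loop $\gamma$ is well-defined. Because $M_\gamma\in SL(2,\C)$ is independent of where we open the loop, the right-hand side integral over $\gamma$ is likewise insensitive to the choice of base point along $\gamma$, which makes the formula meaningful as a statement about $\gamma\in\pi_1$.

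The main technical obstacle is the same one flagged around \eqref{varQ}: the differential $h(x,t)v(t)$ has order-$4$ poles at $t=x$ and $t=x^\mu$ with nontrivial residues, so the inner integral over $s_i^*$ and the outer integral along $\gamma$ do not commute freely, and one must be scrupulous that (a) the inner contour $s_i^*$ is realized along the boundary of the $\mu$-invariant fundamental polygon used to define $z(x)$, and (b) as $x$ varies along $\gamma$ the pole loci $t=x,\,x^\mu$ do not cross $s_i^*$, or if they do, that the jump is accounted for — but since $\gamma$ is chosen inside the cut polygon and $s_i^*$ lies on its boundary, this is arranged by construction. Once the bookkeeping of contours is pinned down, the corollary is an immediate consequence of the proposition by the telescoping argument above; I would present it as such rather than re-deriving the variation-of-parameters step.
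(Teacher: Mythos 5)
Your argument is correct and is exactly the route the paper intends: the text presents \eqref{varMgam} as an ``immediate corollary'' of \eqref{varPsi}, obtained by continuing the normalized solution $\Psi$ (with $\Psi(x_0)=I$ and $z$ held fixed) around the closed loop $\gamma$ so that the endpoint value is $M_\gamma$ and the integration path in \eqref{varPsi} closes up into $\gamma$. Your additional care about the contour bookkeeping for the poles of $h(x,t)v(t)$ and the $\mu$-invariant fundamental polygon is consistent with the conventions the paper imposes around \eqref{varQ} and \eqref{varPsi}; only the side remark that $M_\gamma^{-1}\p M_\gamma/\p\Pcal_{s_i}$ is ``insensitive to the base point'' is slightly loose (it conjugates nontrivially under a moduli-dependent change of base point), but this does not affect the corollary as stated for fixed $x_0$.
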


\section{Canonical  symplectic structure on $T^*\Mcal_{g,n}$ via periods of $Q$}
\la{Canper}

Let $\{q_i\}_{i=1}^{3g-3+n}$ be a set of holomorphic local coordinates on $\Mcal_{g,n}$. Then for any cotangent vector $p$ we can define its coordinates $p_i$ via
$p=\sum_{i=1}^{3g-3+n} p_i \d q_i$.
Introduce the canonical symplectic structure $\omega_{can}$ on $T^*\Mcal_{g,n}$:
\be
\omega_{can}=\sum_{i=1}^{3g-3+n} \d p_i\wedge \d q_i
\la{omcan}
\ee
and the corresponding symplectic potential (the Liouville 1-form)
\be
\theta_{can}=\sum_{i=1}^{3g-3+n} p_i \d q_i\;.
\la{Lican}
\ee

Define the "homological" symplectic structure on $\Qc_{g,n}^0$ in terms of periods $(A_i,B_i)$:
\be
\omega_{hom}=2\sum_{i=1}^{3g-3+n} \d A_i\wedge \d B_i
\la{omhom}
\ee
and  its symplectic potential
\be
\la{Lihom}
\theta_{hom}=2\sum_{i=1}^{3g-3+n} A_i dB_i\;.
\ee
Notice that the form $\omega_{hom}$ is independent of the choice of basis $(a_i^-,b_i^-)$ in $H_-$ but $\theta_{hom}$ depends on this choice.
On the other hand, the 1-form $ \theta=\sum_{i=1}^{3g-3+n} A_i dB_i - B_i dA_i$ (such that $d\theta=d\theta_{hom}=\omega_{hom}$)
is also independent of a choice of symplectic basis in $H_-$.

\subsection{Local coordinate systems on $\Mcal_{g,n}$ and dual quadratic differentials}

Consider first the  case $n=0$ and exclude the hyperelliptic locus from consideration.
The natural local coordinates on the moduli space $\Mcal_g$ of unpunctured Riemann surfaces of genus $g$ can be obtained using the 
Torelli theorem stating that
the complex structure of a Riemann surface $\Ccal$ is uniquely determined by its period matrix $\Omega$. 
Therefore, in a neighbourhood of any point of $\Mcal_g$ there exists a subset $D$ of $3g-3$ matrix entries of $\Omega$ which can be chosen as local coordinates $\{q_{jk}=\Omega_{jk}\}$, $(jk)\in D$  on $\Mcal_g$. 
The cotangent vectors $d\Omega_{jk}$ to $\Mcal_g$ at a given point can be identified with holomorphic quadratic differentials $v_j v_k$:
\be
d\Omega_{jk}\hskip0.5cm \sim \hskip0.5cm v_j v_k
\ee
since variation of $\Omega_{jk}$ under an infinitesimal deformation $\delta_\mu$ of a Riemann surface defined by a Beltrami differential $\mu$
($\delta_{\mu}$ can be viewed as an element of the tangent space to $\Mcal_g$ at a given point)  is given by the Rauch variational formula
$\delta_\mu \Omega_{jk}=\int_{\Ccal} v_j v_k \mu$. 
The analog of Rauch's formulas on the space $\Qcal_g$ is given by (\ref{varO}) which also confirms the identification of the tangent vector $d\Omega_{jk}$ with holomorphic quadratic differential $v_j v_k$.

For an arbitrary $n$ we are going to consider separately the  case $g\geq 2$ and the low genus cases $g=1$ and $g=0$.

\subsubsection{Coordinates on $\Mcal_{g,n}$ for $g\geq 2$}

Local holomorphic coordinates on $\Mcal_{g,n}$ can be introduced in various ways. We are going to use the following set of coordinates:
\begin{itemize}
\item
A set of $3g-3$ entries $\Omega_{jk}$ $(jk)\in D$ of the period matrix. These coordinates  determine the complex structure of $\Ccal$. The cotangent vector $\d\Omega_{jk}$  is represented by the holomorphic quadratic differential $v_j v_k$. These holomorphic quadratic differentials
span a $3g-3$-dimensional subspace in $3g-3+n$ - dimensional cotangent space $T^*_{\Ccal} \Mcal_{g,n}$.
\item
The additional $n$ coordinates (outside, possibly, a a subset of $\M_{g,n}$ of codimension 1) we are going to choose as follows:
\be
w_k=\frac{v_i}{v_j}(y_k)\;,
\la{defqk}
\ee
where $v_i$ and $v_j$ is an arbitrary pair of normalized holomorphic differentials on $\Ccal$ such that $v_j(y_k)\neq 0$. The differential 
$v_k$ has to be chosen differently in different neighbourhoods on the moduli space. [The existence of a holomorphic differential
non-vanishing at any given point $y_k$ can be easily proved. Namely, assuming that all holomorphic differentials vanish at $y_k$ we conclude that all periods of the second kind differential $B(x,y_k)$ vanish; thus the antiderivative of $B(x,y_k)$ is a meromorphic function whose only 
simple pole is at $y_k$ which is a contradiction].  

The cotangent vector $d w_k$ can be represented by the following (generically meromorphic) differential $Q_k$ whose only pole of first order is at the marked point $y_k$:
\be
Q_k(t)=\frac{1}{4\pi i} \frac{v_i(t) v_j (y_k) -v_i(y_k) v_j(t)}{v_j^2 (y_k)} B(t,y_k) \;,
\la{Qk}
\ee
where $B$ is the canonical bimeromorphic differential. If the numerator of (\ref{Qk}) vanishes at $y_k$ then the quadratic differential $Q_k$ is
holomorphic; thus the function $w_k$ can not be used as local coordinate on the subspace (of codimension 1) of $\Mcal_{g,n}$ defined by equation
$(v_i' v_j-v_i v_j')(y_k)=0$.  For our purposes, however, it is sufficient to cover $\Mcal_{g,n}$ by coordinate charts
outside of a subspace of  codimension one since all equalities of forms we derive can be extended to
the whole space by analyticity.

To verify that (\ref{Qk}) indeed corresponds to cotangent vector $\d w_k$ we use variational formulas (\ref{varV}) to get
\be
\frac{\p w_k}{\p \Pcal_i}=\int_{s_i^*} \frac{Q_k(t)}{v(t)}\;.
\la{varwk}
\ee

   
\end{itemize}

\subsubsection{Coordinates on $\Mcal_{1,n}$ }
 Let $n\geq 2$.
The dimension of $\Mcal_{1,n}$ equals $n$. The first   coordinate $q_1$ on $\Mcal_{1,n} $ is  the period $\Omega$.
The holomorphic quadratic differential corresponding to the cotangent vector $d\Omega$ is given by $v_1^2$,
in analogy to the higher genus case.

The remaining $n-1$ coordinates $q_2,\dots,q_{n}$ are defined by integrating the normalized differential $v_1$ between poles of $Q$:
\be
q_k=\int_{y_1}^{y_{k}} v_1
\la{defuk}
\ee
for $k=2,\dots,n$. 

The meromorphic quadratic differential with simple poles at $y_1$ and $y_{k}$ which corresponds to the cotangent vector $d q_k$ equals
\be
Q_k(t)= \frac{1}{4\pi i} v_1(t) W_{y_1 y_{k}}(t)\;,
\la{Qkgen1}
\ee
where $W_{xy}(t)$ is the normalized (its $a$-period vanishes)  differential of third kind on $\Ccal$ with simple poles at $x$ and $y$ and residues $+1$ and $-1$, respectively.

The correspondence between the cotangent vector $d q_k$ and the meromorphic quadratic differential $Q_k$ can be verified by  integrating  variational formulas (\ref{varV}) for the differential $v_1$ between $y_1$ and $y_{k}$ which gives
\be
\f{\p q_k}{\p\Pcal_i}=\int_{s_i^*} \frac{Q_k}{v}\;,\hskip0.7cm k=2,\dots,n\;.
\ee

\subsubsection{Coordinates on $\Mcal_{0,n}$ }

Consider the non-trivial case $n\geq 4$; then $\dim M_{0,n}=n-3$. 
 Define  coordinates $q_k$ on  $M_{0,n}$ as follows: 
 \be
q_k= \int_{y_3}^{y_{k+3}} W_{y_1 y_2} 
\la{qk0}
\ee
for $k=1,\dots,n-3$, where as before 
\be
W_{y_1 y_2} =\left(\frac{1}{x-y_1}-\frac{1}{x-y_{k+1}}\right)dx
\ee
is the meromorphic differential of third kind on the Riemann sphere.

According to general variational formulas for the third kind differential $W_{p_1 p_2}(x)$   (\ref{varW}) the quadratic differential corresponding to cotangent vector $dq_k$ is given by
\be
Q_k=\frac{1}{4\pi i} W_{y_1 y_2} W_{y_3 y_{k+3}}\;.
\la{Qk0}
\ee

Computing the coordinates $q_k$ (\ref{qk0}) and the corresponding meromorphic differentials $Q_k$ (\ref{Qk0}) 
explicitly, we get
\be
q_k=\log\frac{(y_{k+3}-y_1)(y_3-y_{2})}{(y_3-y_1)(y_{k+3}-y_2)}
\ee
and
\be
Q_k=\frac{1}{4\pi i} \frac{(y_1-y_2)(y_3-y_{k+3})}{(x-y_1)(x-y_2)(x-y_3)(x-y_{k+3})}(dx)^2\;.
\ee

\subsection{Equivalence of canonical and  homological symplectic structures on
 $T^*\Mcal_{g,n}$}

The following theorem generalizes
Theorem 4.1 of \cite{BKN} to the case $n>0$.

\begin{theorem}\la{theo1}
\la{canonhomol}

Being restricted to 
the moduli space $\Qcal_{g,n}^0$, the canonical symplectic form (\ref{omcan}) coincides with the homological symplectic form (\ref{omhom}):
\be
\omega_{can}=\omega_{hom}\;.
\la{omom}
\ee
The canonical symplectic potential (\ref{Lican}) is expressed in terms of homological coordinates as follows:
\be
\theta_{can}= \sum_{i=1}^{3g-3+n} (A_i dB_i-B_i dA_i)\;,
\la{thetaAB}
\ee
and, therefore, the function generating the transition between canonical and  homological Darboux coordinates is given by
\be
G=\sum_{i=1}^{3g-3+n}A_i B_i\;.
\la{genfun}
\ee
\end{theorem}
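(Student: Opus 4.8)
The plan is to prove the three statements in the order they are listed, deriving each from the previous one, and to reduce everything to the variational formulas \eqref{varO}, \eqref{varV}, \eqref{varW} together with the explicit descriptions of the dual quadratic differentials $v_jv_k$ and $Q_k$ given above. The natural strategy, following \cite{BKN}, is to compute the canonical Liouville form $\theta_{can}=\sum p_i\,\d q_i$ in the period coordinates $(A_i,B_i)$ directly: the key point is that both $\d q_i$ (for each of the coordinate types $\Omega_{jk}$, $w_k$, $q_k$) and the momenta $p_i$ can be re-expressed through periods of the Abelian differential $v=\sqrt{Q}$ on $\Ch$, because the cotangent vector $\d q_i$ is realized by an explicit quadratic differential (either $v_jv_k$ or $Q_k$) and the pairing of a quadratic differential $P$ with the tangent vector $\partial/\partial\Pcal_{s_i}$ is, by the variational formulas, $\int_{s_i^*} P/v$. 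So I would first establish the pairing identity: for the basis $\{v_jv_k\}\cup\{Q_k\}$ of cotangent vectors dual to the chart $\{\Omega_{jk}\}\cup\{w_k\}$ (resp. the $g=1,0$ analogues), one has $\partial q_i/\partial \Pcal_{s_\ell}=\int_{s_\ell^*} (\text{dual quad. diff.})/v$; this is exactly \eqref{varO}, \eqref{varV}, \eqref{varW}. Dually, an arbitrary cotangent vector $Q=\sum p_i\,\d q_i$ corresponds to the quadratic differential $Q=\sum p_i Q^{(i)}$ where $Q^{(i)}$ is the $i$-th dual differential, and under $v^2=Q$ this is the statement that $v$ itself encodes the momenta.

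Next I would compute $\theta_{can}$ in period coordinates. Writing $\theta_{can}=\sum_i p_i\,\d q_i$ and using $\d q_i=\sum_\ell (\partial q_i/\partial \Pcal_{s_\ell})\,\d\Pcal_{s_\ell}$, the pairing identity turns $\sum_i p_i\,\d q_i$ into a sum of terms $\sum_\ell \big(\int_{s_\ell^*} Q/v\big)\,\d\Pcal_{s_\ell}$, i.e.\ into $\sum_\ell \big(\int_{s_\ell^*} v\big)\,\d\Pcal_{s_\ell}$ once one uses $Q=v^2$ so that $Q/v=v$. Here one must be careful that the momenta $p_i$ are precisely the decomposition coefficients of $Q=v^2$ in the dual basis, so that $\sum_i p_i Q^{(i)}=v^2$ identically — this is the content of the second-to-last paragraph of Section \ref{Canper}. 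The right-hand side $\sum_\ell \big(\int_{s_\ell^*} v\big)\,\d\big(\int_{s_\ell} v\big)$ is a bilinear expression in periods of $v$ over a symplectic basis of $H_-$ and its dual; choosing $\{s_\ell\}=\{a_i^-,b_i^-\}$, for which the dual periods are $\{-2b_i^-,2a_i^-\}$, it collapses (by a Riemann-bilinear–type bookkeeping of the intersection pairing, exactly as in \cite{BKN}) to $\sum_i\big(A_i\,\d B_i - B_i\,\d A_i\big)$ up to the normalization factors, giving \eqref{thetaAB}. Then \eqref{omom} follows by taking $\d$ of \eqref{thetaAB}: $\d\theta_{can}=\sum_i(\d A_i\wedge\d B_i-\d B_i\wedge\d A_i)=2\sum_i\d A_i\wedge\d B_i=\omega_{hom}$. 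Finally \eqref{genfun} is immediate: $\theta_{can}=\theta_{hom}-\d G$ with $\theta_{hom}=2\sum A_i\,\d B_i$ and $G=\sum A_iB_i$, so $G$ is the generating function of the transition between the two Darboux systems.

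I expect the main obstacle to be the genus-independent bookkeeping that makes the three coordinate charts ($g\ge 2$, $g=1$, $g=0$) yield a uniform answer, together with keeping track of the factors of $2$, $1/2$, $1/(4\pi i)$ and the orientation conventions in the intersection pairing $a_i^-\circ b_j^-=\delta_{ij}/2$. Concretely, the delicate step is verifying that $\sum_i p_i\,\d q_i$, after substituting the dual quadratic differentials, really reproduces $v^2$ under the integrand and not merely its restriction to some subspace — one has to check that the $3g-3$ holomorphic directions (represented by $v_jv_k$) and the $n$ meromorphic directions (represented by the $Q_k$ of \eqref{Qk}, \eqref{Qkgen1}, \eqref{Qk0}) together span the full $(3g-3+n)$-dimensional space $T^*_\Ccal\Mcal_{g,n}$ of meromorphic quadratic differentials with at worst simple poles at the $y_k$, so that the decomposition $Q=\sum p_iQ^{(i)}$ exists and is unique. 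Granting the span statement (which follows from dimension count plus the non-vanishing arguments already given in the excerpt for the existence of a suitable $v_j$), the remainder is the same Riemann-bilinear manipulation as in Theorem 4.1 of \cite{BKN}, now carried out on $H_-(\Ch,\R)$ in place of $H_1(\Ch,\R)$; the reduction from holomorphic Abelian differentials to the $\mu$-anti-invariant subspace is exactly what \eqref{varO}--\eqref{varW} were set up to provide.
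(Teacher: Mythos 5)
Your proposal is correct and follows essentially the same route as the paper: decompose $Q=v^2$ in the basis of dual quadratic differentials $\{v_jv_k\}\cup\{Q_l\}$, use the variational formulas (\ref{varO}), (\ref{varwk}), (\ref{varW}) to identify the periods of these differentials over $a_i^-,b_i^-$ with the Jacobian $\partial q_i/\partial\Pcal_{s_\ell}$, and collapse $\sum_i p_i\,\d q_i$ to $\sum_i(A_i\,\d B_i-B_i\,\d A_i)$ via the dual cycles $\{-2b_i^-,2a_i^-\}$. The only difference is cosmetic — you run the chain rule from $\theta_{can}$ toward the periods while the paper expresses $A_i,B_i$ in terms of the momenta and works back to $\theta_{can}$ — and the factor-of-two bookkeeping you flag is indeed resolved by the $\tfrac12$ in (\ref{varO}).
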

\begin{proof} 
For $n=0$ the proof is given  in \cite{BKN}, Th.4.1.
 Let  $n\neq 0$. Consider  the case $g\geq 2$.
At generic point of the moduli space  $\Qcal_{g,n}^0$  we can represent the quadratic differential $Q\in \Qcal_{g,n}^0$ as a linear combination of differentials $Q_k$ (\ref{Qk}) and
holomorphic differentials $v_j v_k$ where $(jk)\in D$ for some subset $D$ of matrix entries of the period matrix $\Omega$:
\be
Q=\sum_{(jk)\in D} p_{jk} v_j v_k +\sum_{l=1}^n p_l Q_l\;,
\la{QVVQ}
\ee
or, since $Q=v^2$,
\be
v=\sum_{(jk)\in D} p_{jk} \frac{v_j v_k}{v} +\sum_{l=1}^n p_l \frac{Q_l}{v}\;.
\la{VVVQ}
\ee
Integrating this relation over cycles $a_i^-$ and $b_i^-$ we get
\be
A_i=\sum_{(jk)\in D} p_{jk}\int_{a_i^-}\frac{v_j v_k}{v}+\sum_{l=1}^n p_l \int_{a_i^-}\frac{Q_l}{v}
\la{Aipp}
\ee
and 
\be
B_i=\sum_{(jk)\in D} p_{jk}\int_{b_i^-}\frac{v_j v_k}{v}+\sum_{l=1}^n p_l \int_{b_i^-}\frac{Q_l}{v}\;.
\la{Bipp}
\ee
Taking into account variational formulas (\ref{varwk}) and (\ref{varO}) we rewrite
(\ref{Aipp}) and (\ref{Bipp}) as follows:
\be
A_i=\sum_{(jk)\in D} p_{jk}\frac{\p q_{jk}}{\p B_i}+\sum_{l=1}^n p_l \frac{\p q_l}{\p B_i}
\ee
and 
\be
B_i=-\sum_{(jk)\in D} p_{jk}\frac{\p q_{jk}}{\p A_i}-\sum_{l=1}^n p_l \frac{\p q_l}{\p A_i}\;.
\ee
Therefore,
$$
\sum_{i=1}^{3g-3+n} (A_i\d B_i-B_i \d A_i)=\sum_{(jk)\in D}\sum_{i=1}^{3g-3+n}p_{jk}\left(\frac{\p q_{jk}}{\p A_i}\d A_i+
\frac{\p q_{jk}}{\p B_i}\d B_i\right)+  \sum_{l=1}^n \sum_{i=1}^{3g-3+n}p_{jk}\left(\frac{\p q_l}{\p A_i}\d A_i+
\frac{\p q_l}{\p B_i}\d B_i\right)
$$
\be
=\sum_{(jk)\in D}p_{jk}\d q_{jk}+\sum_{l=1}^n p_l \d q_l =\theta_{can}\;.
\ee

This completes the proof of (\ref{thetaAB}) for $g\geq 2$; applying $d$-operator to this relation we get (\ref{omom}).

For  $g=0,1$ the proof is  analogous. Say, in $g=1$ case the sum of residues of Abelian differential $Q/v_1$ equals $0$; therefore, $Q$ can be represented as a linear combination of $v_1^2$ and quadratic differentials (\ref{Qkgen1}):
\be
Q=p_1 v_1^2 +\sum_{k=2}^n p_k Q_k\;.
\la{QQk}
\ee
Dividing (\ref{QQk}) by $v$, integrating the result over cycles $a_i^-$ and $b_i^-$
and using variational formulas for $\Omega$ and Abelian differentials  $W_{y_1 y_k}$ we get 
 \be
A_i=p_1 \frac{\p q_{1}}{\p B_i}+\sum_{k=2}^n p_k \frac{\p q_k}{\p B_i}
\la{Aipp1}
\ee
and
 \be
B_i=-p_1 \frac{\p q_{1}}{\p A_i}-\sum_{k=2}^n p_k \frac{\p q_k}{\p A_i}\;,
\la{Bipp1}
\ee
which again implies (\ref{thetaAB}).

Consider now the case $g=0$. In genus 0 any quadratic differential with at most simple poles at $y_1,\dots,y_n$ can be represented as a linear combination  of $n-3$ quadratic differentials  $Q_k$ (\ref{Qk0}); thus
\be
v=\sum_{k=1}^{g-3} p_k \frac{Q_k}{v}
\ee
for some coefficients $p_k$.
Furthermore, variational formulas (\ref{varW}) for $W_{y_1y_2}$ imply 
$$
A_i=\sum_{k=1}^{n-3} p_k \frac{\p q_k}{\p B_i}\;,\hskip0.7cm
B_i=-\sum_{k=1}^{n-3} p_k \frac{\p q_k}{\p A_i}
$$
leading to  (\ref{thetaAB}).
\end{proof}

\begin{remark}\rm
The statement (\ref{omom}) about coincidence of the homological and canonical symplectic structures  was contained in the paper \cite{GaiTes} written in 2012 (Section 7.3.2) in the context of meromorphic quadratic differentials with second or higher order poles.\footnote{The author thanks Joerg Teschner for pointing out this reference.} However, the 
supporting argument in  \cite{GaiTes} was given
 on the ``physics'' level of rigour. On the other hand, Theorem 4.1 of \cite{BKN} for $n=0$ and  Theorem \ref{theo1} of this paper (for any   $n$) are proved rigorously. Moreover, in our Theorem \ref{theo1} and Corollary 4.1 of \cite{BKN} we get
a stronger result: we don't only prove the coincidence of the symplectic forms but also explicitly compute the generating function (\ref{genfun}) between two systems of Darboux coordinates. The correspondence of notations used here (as well as in \cite{BKN}) and notations used in \cite{GaiTes} is as follows: the homological coordinates (the ``abelian periods'' $\{A_i,B_i\}$) of $Q$ are called in \cite{GaiTes} the ``Seibeg-Witten central charge functions''  and 
denoted by $(a_i, a_i^D)$. The generating function (\ref{genfun}) of this paper coincides with function ${\mathcal F}$ of \cite{GaiTes} called there the ``Seiberg-Witten prepotential''.
The terminology of this paper follows classical works on Teichm\"uller theory; in particular, periods of $\sqrt{Q}$
were already used in 1975 paper \cite{DouHub} devoted to the theory of Strebel differentials.
\end{remark}

\section{From canonical symplectic structure on $T^*\Mcal_{g,n}$ to Goldman bracket}
\la{canGo}

The canonical symplectic structure on $T^*\Mcal_{g,n}$ (or, equivalently, the symplectic structure (\ref{omhom}) on the space $\Qcal_{g,n}^0$)
induces the Poisson structure on the space of coefficients $u$ of the equation (\ref{matreq}).  The Poisson bracket between $u(z)$ and $u(\zeta)$ (for constant $z$ and $\zeta$) can be computed using the 
variational formula (\ref{varQ}) and Theorem \ref{canonhomol}:
$$
\{u(z),u(\zeta)\}=\frac{1}{2}\sum_{i=1}^{3g-3+n} \frac{\p u(z)}{\p A_i} \frac{\p u(\zeta)}{\p B_i}
-\frac{\p u(z)}{\p B_i} \frac{\p u(\zeta)}{\p A_i}\;,
$$
 which gives, in analogy to the proof of Proposition 4.4 of \cite{BKN}:
\be
\f{4\pi i}{3}\{u(z), u(\zeta)\}
=\L_z\left[\int^z h(z,\zeta)dz\right]-
\L_\zeta\left[\int^\zeta h(z,\zeta)d\zeta\right]\;,
\la{Poissonu}
\ee
where 
 $$
 \L_z=\f{1}{2}\p_z^3-2u(z)\p_z-u_z(z)
 $$
 is called the  ``Lenard's operator'' in the theory of KdV equation \cite{BabBer}. In the computation of the bracket (\ref{Poissonu})
it is assumed that the arguments $z$ and $\zeta$ of $u$ are independent of moduli.

The bracket (\ref{Poissonu}) does not imply Poisson brackets between monodromy matrices of the equation (\ref{matreq}) themselves since 
these monodromies depend also on the choice of the basepoint $x_0$. However, the bracket (\ref{Poissonu}) defines a Poisson structure on
the character variety $\CV_{g,n}^0$ which consists of equivalence classes of monodromy representations (moduli simultaneous  conjugation by some matrix).
We add the index $0$ to the notation  $\CV_{g,n}^0$ to indicate that each of  the monodromy matrices around poles $y_i$ have  
 coinciding eigenvalues ($1$  or $-1$).

The coordinates on   $\CV_{g,n}^0$ can be chosen to be  traces of monodromy matrices $M_\gamma$ for a sufficiently large set of loops $\gamma$.

The following proposition is rather technically tedious; it is proved in complete analogy to the case $n=0$ given in Theorems 7.2 and 7.3 of \cite{BKN}:
\begin{proposition}\la{int01}
Let $\g$ and $\gt$ be two closed contours on $\CC$.  If $\g$ and $\gt$ do not intersect then the traces of the corresponding monodromy 
matrices of equation (\ref{matreq}) Poisson-commute,
\be
\{\tr M_\g,\,\tr M_\gt\}=0\;.
\la{nonint}
\ee
 If $\g$ and $\gt$
 intersect transversally at one point $x_0$  with $\gt\circ\g=1$ then
\be
\{\tr M_\g,\,\tr M_\gt\}=\f{1}{2}(\tr M_\g M_\gt-\tr M_\g M^{-1}_\gt)\;.
\label{Gold}
\ee
\end{proposition}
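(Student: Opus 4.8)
The plan is to reduce the computation to the Poisson bracket (\ref{Poissonu}) of the coefficients $u$ and then to reproduce, step by step, the contour computation of \cite{BKN}, Theorems 7.2 and 7.3, checking only that the branch points of $\Ch$ sitting over the new simple poles $y_1,\dots,y_n$ create no obstruction. First I would combine the fundamental bracket $\{A_i,B_j\}=\delta_{ij}/2$ of Theorem \ref{canonhomol} with the variational formula (\ref{varMgam}) for monodromy matrices, exactly as in the derivation of (\ref{Poissonu}) from (\ref{varQ}): writing $G_\g(x):=\tr\big(M_\g\Ld(x)\big)$, one gets for $\g,\gt\in\pi_1(\Ccal\setminus\{y_i\}_{i=1}^n,x_0)$ the identity $\{\tr M_\g,\tr M_\gt\}=\int_\g\int_\gt G_\g\,G_\gt\,\{u(z),u(\zeta)\}$, and substituting (\ref{Poissonu}),
\[
\{\tr M_\g,\tr M_\gt\}=\f{3}{4\pi i}\int_\g\int_\gt G_\g\,G_\gt\,\Big(\L_z\big[\int^z h\,dz\big]-\L_\zeta\big[\int^\zeta h\,d\zeta\big]\Big)\;,
\]
where $z,\zeta$ are the flat coordinates at $x\in\g$, $y\in\gt$ and $h=h(z,\zeta)$ is as in (\ref{defh}). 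This is the starting point of the argument of \cite{BKN}; the two cases of the Proposition are then handled separately.

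For non-intersecting $\g$ and $\gt$ I would integrate by parts in each of the two terms. The Lenard operator $\L_z=\f12\p_z^3-2u\p_z-u_z$ is formally skew-adjoint, and $\L_z\Ld=0$ by (\ref{3rdor}), hence $\L_zG_\g=\L_\zeta G_\gt=0$; thus each term collapses to the monodromy, around $\g$ respectively $\gt$, of the bilinear concomitant of $G_\g$ (resp. $G_\gt$) with the primitive $\int^z h\,dz$ (resp. $\int^\zeta h\,d\zeta$). The only singularities of these primitives, as functions of $z$ (resp. $\zeta$), lie at $z=\zeta$ and at $z=\zeta^\mu$; choosing $\g,\gt$ disjoint, and disjoint from all zeros and poles of $Q$, these loci are never met on the contours, and the contour-deformation argument of \cite{BKN}, Theorem 7.2, then shows that the surviving boundary contributions cancel. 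This proves (\ref{nonint}).

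When $\g$ and $\gt$ meet transversally at the single point $x_0$, the primitives above develop poles along the contours at $z(x_0)$, so the integration by parts leaves a boundary term concentrated at $x_0$. Evaluated in the flat coordinate — where, by (\ref{singB}) and $S_v\equiv 0$, one has $h(z,\zeta)=(z-\zeta)^{-4}+O\big((z-\zeta)^{-2}\big)$ — this amounts to pairing the fourth-order kernel against the Taylor expansion, along the two contours, of $G_\g(z)G_\gt(\zeta)$. Two facts make the result collapse: the jump of $G_\g$ as $z$ crosses $\gt$ equals $\tr\big((M_\gt M_\g M_\gt^{-1}-M_\g)\Ld(x_0)\big)$, coming from the transformation rule $\Ld\mapsto M_\gt^{-1}\Ld M_\gt$ around $\gt$; and $\Ld(x_0)=\Psi^{-1}(x_0)\sigma_-\Psi(x_0)$ is nilpotent of rank one, so $\Ld(x_0)\,X\,\Ld(x_0)=\tr\big(X\Ld(x_0)\big)\,\Ld(x_0)$ for every $2\times2$ matrix $X$. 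Combined with the $SL(2,\C)$ trace identity $\tr(AB)+\tr(AB^{-1})=\tr A\,\tr B$, these reduce the boundary term to $\f12(\tr M_\g M_\gt-\tr M_\g M_\gt^{-1})$, which is (\ref{Gold}); this is the content of \cite{BKN}, Theorem 7.3. The hard part is exactly this last local computation — matching the fourth-order pole against the Taylor data of $G_\g G_\gt$, carrying along the $\f16 S_B$ (equivalently $u$) corrections, and regularising the coincident-contour limit so that only the finite term survives. Being purely local near $x_0$, it is insensitive to the enlargement of $\Ch$; the only genuinely new point in the meromorphic setting is the global observation, already used above, that the differential $h(x,t)v(t)=\frac{B^2(x,t)}{Q(x)Q(t)}\,v(t)$ is regular at each new branch point $y_k$ — the simple zero of $1/Q(t)$ there cancels the relevant pole — so that the bilinear-identity and residue manipulations underlying (\ref{varQ}) and (\ref{Poissonu}), hence the entire argument of \cite{BKN}, carry over verbatim.
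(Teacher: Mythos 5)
Your proposal follows essentially the same route as the paper, which proves this Proposition by direct analogy with Theorems 7.2 and 7.3 of \cite{BKN}: reduce to the bracket (\ref{Poissonu}) via (\ref{varMgam}), integrate by parts using the skew-adjointness of $\L_z$ and $\L_z\Ld=0$, and evaluate the boundary term at the intersection point using the nilpotency of $\Ld$ and the $SL(2,\C)$ trace identity. You also correctly isolate the only genuinely new point in the meromorphic setting — the regularity of $h(x,t)v(t)$ at the branch points over the simple poles $y_k$ — so the argument carries over as the paper claims.
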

 
From this proposition one can deduce the validity of Goldman's bracket for any two contours $\g$ and $\gt$ by finding a sufficiently large supply of loops on $\Ccal$ which intersect at no more than one point.

\begin{theorem}\la{mainth}
The canonical Poisson structure on $T^*\Mcal_{g,n}$ implies the Goldman's bracket for traces of monodromies of equation (\ref{matreq}),
i.e. for any two loops  $\g,\gt\in \pi_1(\CC,x_0)$ one has
\be
\{\tr M_\g,\;\tr M_\gt\}= \f{1}{2}\sum_{p\in \g\cap \gt} \nu(p) (\tr M_{\g_p\gt}-\tr M_{\g_p\gt^{-1}})\;,
\label{Goldman}
\ee  
where $\g_p\gt$ and $\g_p\gt^{-1}$ are two ways to resolve the intersection point $p$ to get two new contours  $\g_p\gt$ and $\g_p\gt^{-1}$ for each $p$; $\nu(p)=\pm 1$ is the contribution of the point $p$ to the intersection index of 
$\gamma$ and $\tilde{\gamma}$.
\end{theorem}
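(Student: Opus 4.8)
The plan is to reduce the general statement (\ref{Goldman}) to the single-intersection-point case (\ref{Gold}) of Proposition \ref{int01}, together with the non-intersecting case (\ref{nonint}). The strategy is purely topological once Proposition \ref{int01} is in hand, and mirrors Goldman's original argument \cite{Gold84} as adapted in \cite{BKN}. First I would observe that the Poisson bracket $\{\tr M_\g,\tr M_\gt\}$ depends only on the free homotopy classes of the loops $\g,\gt$ in $\Ccal\setminus\{y_i\}_{i=1}^n$, since the monodromy matrices (up to conjugation, i.e. as points of $\CV_{g,n}^0$) and hence their traces depend only on these classes, and the bracket (\ref{Poissonu}) is intrinsically defined on $\CV_{g,n}^0$. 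Consequently one is free to replace $\g$ and $\gt$ by any freely homotopic representatives; in particular one can choose representatives in general position, meeting transversally in finitely many points $p_1,\dots,p_m$ with $m=\#(\g\cap\gt)$ counted geometrically.

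The core of the argument is then an induction on $m$, the geometric intersection number of the chosen representatives. The base cases $m=0$ and $m=1$ are exactly (\ref{nonint}) and (\ref{Gold}) of Proposition \ref{int01} (after, in the $m=1$ case, possibly conjugating so that the unique intersection point is the basepoint $x_0$ and orienting so that $\gt\circ\g=\nu(p_1)$; the sign $\nu(p_1)=-1$ case is obtained from the stated one by swapping the roles of $\g$ and $\gt$ and using antisymmetry of the bracket, or equivalently by reversing an orientation, which replaces $M_\g$ by $M_\g^{-1}$ and the right side by its negative). For the inductive step with $m\ge 2$, pick one intersection point, say $p_1$, and perform the two local surgeries at $p_1$ that produce the resolved loops: one gets loops $\g_{p_1}\gt$ and $\g_{p_1}\gt^{-1}$, each of which, as a \emph{geometric} curve, still meets the picture in the $m-1$ remaining intersection points $p_2,\dots,p_m$ of $\g$ with $\gt$ (the surgery is supported in a small disc around $p_1$ and removes that crossing without creating new self- or mutual intersections among the relevant arcs). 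Here I would use the key algebraic identity for $2\times 2$ trace functions underlying the Goldman bracket — namely that for $X,Y\in SL(2,\C)$ one has $\tr(XY)+\tr(XY^{-1})=\tr X\,\tr Y$ — which translates the \emph{difference} of traces of the two resolutions at $p_1$ into the commutator-type expression governing the Poisson bracket, allowing the local contribution at $p_1$ to be split off. Iterating, or equivalently summing the inductive relation, yields $\{\tr M_\g,\tr M_\gt\}=\sum_{p\in\g\cap\gt}\nu(p)\,(\text{local term at }p)$, and the local term at each $p$ is identified via the single-point case (\ref{Gold}) as $\tfrac12(\tr M_{\g_p\gt}-\tr M_{\g_p\gt^{-1}})$, giving (\ref{Goldman}).

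To make the bookkeeping clean I would phrase the induction slightly differently, following \cite{BKN,Gold84}: fix $\gt$ and a representative of $\g$ in general position with $\g\cap\gt=\{p_1,\dots,p_m\}$, and write $M_\g$ as an ordered product of the ``segments'' of $\g$ cut at the points $p_i$; the variational formula (\ref{varMgam}) for $M_\g$ together with the computation of $\{u(z),u(\zeta)\}$ in (\ref{Poissonu}) shows that $\{\tr M_\g,\tr M_\gt\}$ is a sum of contributions localized at the intersection points, each of which reduces to a single-point computation already carried out in Proposition \ref{int01}. In fact, the honest content of Proposition \ref{int01} is precisely this localization plus the evaluation of one such local term; the passage to Theorem \ref{mainth} is then the standard combinatorial assembly, verbatim as in Section 7 of \cite{BKN}, and needs no new analytic input. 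I would also remark that independence of the answer from the base projective connection ($S_B$ versus Schottky or Bers) follows exactly as in \cite{BKN,Kawai}: changing the base connection changes $u$ by a moduli-dependent but fixed (monodromy-free) shift that does not affect the Poisson bracket of traces, since traces of monodromies are unchanged and the bracket (\ref{Poissonu}) is insensitive to the additive normalization.

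I expect the main obstacle — though it is hidden because it is cited rather than proved here — to be the \emph{single-intersection-point} computation (\ref{Gold}) of Proposition \ref{int01}, i.e. actually evaluating the localized contribution of one transversal crossing from the variational formula (\ref{varMgam}) and the Lenard-operator bracket (\ref{Poissonu}); this requires careful tracking of the integration contours (the homology classes in $H_1(\Ch\setminus\{x,x^\mu\})$ dictated by the fundamental polygon of $\Ch$), of the monodromy factors $\pm i$ around the branch points $x_i$ and poles $y_i$ introduced by the spinor $v^{1/2}$, and of the resolution of the double pole of $h(x,t)v(t)$ at the crossing point. Once that localized identity is granted, the remaining work — the passage from one crossing to $m$ crossings — is routine topology and the $SL(2)$ trace identity, and carries over from \cite{BKN} with only notational changes to account for the punctures $y_i$ and the fact that the fundamental group is now $\pi_1(\Ccal\setminus\{y_i\},x_0)$ rather than $\pi_1(\Ccal,x_0)$.
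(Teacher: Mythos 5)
Your proposal and the paper's proof diverge after the common starting point (Proposition \ref{int01}). The paper does \emph{not} attempt any induction on the geometric intersection number. Instead it verifies the bracket (\ref{Gold}) only on the finite family $\SS$ of loops listed in (\ref{test}), chosen so that any two of its members intersect in at most one point, and then shows that the functions $\tr M_\g$, $\g\in\SS$, determine the point of $\CV_{g,n}^0$ (the argument reconstructing $M_{\a_i}$, $M_{\b_j}$, $M_{\kappa_k}$ from the traces), i.e.\ they furnish local coordinates at a generic point. Since the Poisson structure induced from $T^*\Mcal_{g,n}$ agrees with Goldman's on a coordinate system, the two Poisson structures coincide as tensors on $\CV_{g,n}^0$, and the formula (\ref{Goldman}) for \emph{arbitrary} pairs of loops is then inherited from Goldman's own theorem about his bracket --- no new computation at multiple intersection points is ever performed. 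This is the whole point of the paper's route: it sidesteps exactly the multi-crossing analysis you place at the centre of your argument.

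Your alternative route has a genuine gap in the inductive step. Resolving the crossing $p_1$ replaces the \emph{pair} $(\g,\gt)$ by two \emph{single} loops $\g_{p_1}\gt$ and $\g_{p_1}\gt^{-1}$; there is no new pair of loops with $m-1$ intersection points to which an inductive hypothesis about $\{\tr M_\g,\tr M_\gt\}$ could apply, so the recursion does not type-check. The identity $\tr(XY)+\tr(XY^{-1})=\tr X\,\tr Y$ relates traces of products but does not by itself split the Poisson bracket into a local contribution at $p_1$ plus a remainder governed by the other crossings. The honest version of your argument is the one you sketch in your third paragraph --- localization of the double contour integral built from (\ref{varMgam}) and (\ref{Poissonu}) over $\g\times\gt$ at the intersection points --- but that requires extending the ``technically tedious'' computation of Proposition \ref{int01} from one crossing to $m$ crossings (with the basepoint lying on both loops and the matrices $\Lambda$ not single-valued along them), and this extension is neither carried out in Section 7 of \cite{BKN} nor in this paper; citing it as ``verbatim'' assembly misattributes the source. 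To repair your proof you would either have to actually perform that multi-point localization, or switch to the paper's coordinate-system argument, in which case the only $n$-dependent new input is the reconstruction of the $M_{\kappa_k}$ from traces of $\kappa_k$, $\kappa_k\a_i$, $\kappa_k\b_i$ --- a step your proposal does not address at all.
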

\begin{proof}
The proof is  parallel to the proof of Theorem 7.4 of \cite{BKN}. Namely, any two loops from the set
consisting of  the following  $g^2+2g+2gn+n$ elements of $\pi_1(\Ccal\setminus \{y_i\}_{i=1}^n,x_0)$:
\be
\SS=\{\a_i,\ \b_i,\; \ 1 \leq i \leq g\;; \;\;\; \a_i\a_j, \;\; i<j\;;\; \;\; \a_i\b_j, \; i \leq j \;;\;\;
\kappa_k, \;\kappa_k\a_i,\; \ \kappa_k\b_i,\; \; k=1,\dots,n\;,\;\; i,j=1,\dots,g\}
\label{test}
\ee
either don't intersect or intersect 
at only one point; thus (\ref{Goldman}) holds for any pair of loops $\g,\gt\in \SS$. Notice that the number of functions $\tr M_\g$ for $\g\in\SS$
is always greater than the dimension $6g-6+2n$ of the character variety $\CV_{g,n}^0$; however, it remains to show that at generic point of
 $\CV_{g,n}^0$ the differentials of $\tr M_\g$ generate the whole cotangent space to  $\CV_{g,n}^0$.

By simultaneous conjugation of all monodromies we can transform the matrix $M_{\a_g}$  to a diagonal form (generically), and the matrix $M_{\b_g}$ to a matrix whose 
fixed point (of associated M\"obius transformation) equals 1. Then from the triple of numbers $(\tr M_{\a_g},\,\tr M_{\b_g}\,,\tr M_{\a_g}M_{\b_g})$
we find matrices $M_{\a_g}$ and $M_{\b_g}$. Then to find any other matrix $M_{\a_i}$ one can use the triple of loops $(\a_i,\a_i\a_g,\a_i\b_g)$ also
contained in in the set $\SS$. 

In turn, knowing all matrices $M_{\a_i}$ as well as $M_{\b_g}$ one can reproduce the  matrices $M_{\b_2},\dots,M_{\b_{g-1}}$ since for each $j$ the
set $\SS$ contains $\b_j$ and $\a_i\b_j$ for $i=1,\dots,j$, which is sufficient to determine $M_{\b_j}$ for $j\geq 2$ (up to a binary choice).

Now we have  enough information to determine $M_{\kappa_i}$ from knowing $\tr M_{\kappa_i\a_j}$ for all $\a_j$ (actually, each of $M_{\kappa_i}$ has only one 
unknown entry. The only remaining unknown matrix is $M_{\b_1}$; this matrix can be determined knowing $\tr  M_{\b_1}$, $\tr M_{\a_1}M_{\b_1}$ and
using the relation (\ref{relmon}) as discussed in \cite{BKN}.
\end{proof}

\subsection{Admissible holomorphic sections of $\Proj_g$}

The map from the space of meromorphic quadratic differentials to the character variety $\CV_{g,n}^0$ via the monodromy map of the Schr\"odinger equation is highly non-canonical: it essentially depends on the choice of the base holomorphic section of the affine bundle of projective connections over 
an appropriate covering of the
moduli space. For example,  the Bergman projective connection used in \cite{BKN} (as well as in this paper) depends on the choice of canonical basis in homology group of $\Ccal$, and therefore 
is a holomorphic section of the affine bundle of projective connections over the Torelli space. In general, to be able to compare different holomorphic sections of the affine bundle of projective connections it's convenient to consider the affine bundle $\Proj_g$ of holomorphic projective connections over the Teichm\"uller space $\Tcal_g$.

Choosing any other holomorphic section $S_0$ of  $\Proj_g$ we could also write the Schr\"odinger equation in the form
\be
\varphi''+ \left(\frac{1}{2} S_0+Q\right)\varphi=0\;.
\la{schalt}
\ee

In \cite{BKN} we have discussed alternative ways of fixing $S_0$: the Schottky or quasi-fuchsian (Bers) projective connections
(the most common Fuchsian projective connection is not suitable in this framework since it depends non-holomorphically on moduli of $\Ccal$).
In particular, we have shown that if $S_0$ is the Schottky projective connection then the canonical symplectic structure on $T^*M_g$ 
also implies the Goldman Poisson structure on the character variety $\CV_g$. The same statement for the case when $S_0$ is given by Bers projective connection
follows from the paper by S.Kawai \cite{Kawai}.  [We recall that the Schottky projective connection is given by Schwarzian derivative $\Scal(z_S,t)$
with respect to any local coordinate $t$,
where $z_S$ is the Schottky uniformization coordinate. The Bers projective is given by the Schwarzian derivative  $\Scal(z_{\Ccal_0},t)$ where $z_{\Ccal_o}$ is the coordinate
in the fundamental domain of the Kleinian group defining simultaneous uniformization of $\Ccal$ and some ``fixed'' Riemann surface $\Ccal_0$ with anti-holomorphic complex structure, i.e. 
there exists in fact infinitely many Bers  projective connections; these projective connections are labeled
by points of Teichm\"uller space].

Therefore, all of the choices of the base projective connection $S_0$ listed above  are  equivalent from symplectic point of view, 
which inspires the following definition:

\begin{definition}\la{defad}
\rm
A holomorphic section $S_0$ of the affine line bundle $\Proj_g$ over the Teichm\"uller space $\Tcal_{g}$ is called {\it admissible} if the canonical symplectic structure on $T^*\Mcal_{g,n}$ 
implies Goldman's bracket on the character variety $\CV_{g,n}$ under the monodromy map of equation (\ref{schalt}).
\end{definition}

According to discussion of \cite{BKN}, two holomorphic sections, $S_0$ and $S_1$, of $\Proj_g$ are equivalent  iff
there exists a holomorphic function $G_{01}$ on $\Tcal_g$ such that 
\be
\delta_\mu G_{01}=\int_{\Ccal} \mu (S_0-S_1)\;,
\ee
where $\mu$ is an arbitrary Beltrami differential.
The function $G_{01}$   is  the generating function of the corresponding symplectomorphizm $Q\to Q+\frac{1}{2}(S_0-S_1)$ of $T^*\Mcal_g$ \cite{BKN}.
Clearly, the equivalence of $S_0$ and $S_1$  does not depend on the number of punctures $n$. Since the equivalence of Bergman, 
Schottky and Bers projective connections follows from the analysis of $n=0$ case contained in \cite{Kawai,BKN}, we can formulate the following
corollary of Theorem \ref{mainth}. 
\begin{corollary}
If the holomorphic section of $\Proj_g$ in equation (\ref{schalt}) is chosen to be either Bergman (corresponding to any Torelli marking), Schottky (corresponding to any choice of generators) or 
 Bers (corresponding to any ``base'' Riemann surface $\Ccal_0$) projective connections
then the canonical symplectic structure on $T^*\M_{g,n}$ induces the Goldman bracket on the character variety $\CV_{g,n}^0$ under the monodromy map of
equation (\ref{schalt}).

\end{corollary}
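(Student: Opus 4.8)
The plan is to derive the corollary from Theorem~\ref{mainth} by transporting the symplectic picture along the equivalence relation on holomorphic sections of $\Proj_g$ recalled above. Theorem~\ref{mainth} already states that the Bergman section $S_B$ is admissible in the sense of Definition~\ref{defad} for every $n$; the Schottky and Bers sections are equivalent to $S_B$ by the $n=0$ analysis of \cite{Kawai,BKN}, and that equivalence is witnessed by a holomorphic function on $\Tcal_g$ alone, hence is insensitive to the punctures. So the corollary reduces to a single statement: \emph{admissibility is invariant under equivalence of the base projective connection}.

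To prove that, fix two sections $S_0,S_1$ of $\Proj_g$ with $S_0-S_1=dG_{01}$ for a holomorphic function $G_{01}$ on $\Tcal_g$ (the situation guaranteed by \cite{BKN} for the three sections in question). The difference $\frac{1}{2}(S_1-S_0)$ is a holomorphic quadratic differential on $\Ccal$ depending holomorphically on the complex structure, so adding it to a meromorphic quadratic differential $Q$ with $n$ simple poles preserves the pole data. Thus the fibre translation
\be
\Theta:\ (\Ccal,Q)\ \longmapsto\ \Bigl(\Ccal,\ Q+\frac{1}{2}(S_1-S_0)\Bigr)
\ee
is a well-defined self-map of $\Qcal_{g,n}^0$ (away from a locus of codimension $\ge 1$, which is harmless by analyticity). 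Since $\frac{1}{2}(S_1-S_0)=\frac{1}{2}\,dG_{01}$ is an exact $1$-form on $\Mcal_{g,n}$ (pulled back from the complex-structure directions), $\Theta$ changes the Liouville form $\theta_{can}$ by an exact form, hence preserves $\omega_{can}$; that is, $\Theta$ is a symplectomorphism of $(T^*\Mcal_{g,n},\omega_{can})$, with generating function a multiple of $G_{01}$, cf.\ \cite{BKN}.

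Next I would observe that $\Theta$ does not change the monodromy data. Indeed $\frac{1}{2}S_0+\bigl(Q+\frac{1}{2}(S_1-S_0)\bigr)=\frac{1}{2}S_1+Q$, so equation~(\ref{schalt}) with base $S_1$ written at $(\Ccal,Q)$ is literally equation~(\ref{schalt}) with base $S_0$ written at $\Theta(\Ccal,Q)$, and the same holds for the matrix equations~(\ref{matrixPhi}),~(\ref{matreq}) and the associated Schwarzian equation. Consequently the $SL(2,\C)$ monodromy representation produced by the construction of Section~\ref{mondef} is the same for the two sets of data: the canonical cover, the twist by $v^{1/2}$, and the loop $\kappa$ all depend only on the positions of the zeros and poles of the underlying quadratic differential, which $\Theta$ leaves untouched, so even the lift from $PSL(2,\C)$ to $SL(2,\C)$ and the resulting signs of the $\tr M_\gamma$ coincide (this is the point handled by Lemma~6.2 of \cite{BKN}). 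Hence the monodromy map with base $S_1$ equals the monodromy map with base $S_0$ precomposed with $\Theta$, and both land on the same character variety $\CV_{g,n}^0$. Pushing $\omega_{can}$ forward: its image under the $S_1$-monodromy map equals the image of $\Theta^*\omega_{can}=\omega_{can}$ under the $S_0$-monodromy map, which by admissibility of $S_0$ is the Goldman bracket. Taking $S_0=S_B$ and $S_1$ equal to the Schottky or Bers connection completes the proof.

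The step I expect to be the only genuine obstacle is the last point of the monodromy comparison: checking that changing the base projective connection along the fibre translation $\Theta$ introduces no spurious signs in the $SL(2,\C)$ lift, i.e.\ that the lift of Section~\ref{mondef} is canonical enough to be transported unchanged. As indicated, this rests on the fact that the relevant topological data — the homomorphism $h$ of Section~\ref{Cancov}, the loop $\kappa$ encircling a multiple of four zeros, and Lemma~6.2 of \cite{BKN} — are functions of the divisor of zeros and poles only, which $\Theta$ preserves. Everything else (invariance of $\omega_{can}$ under fibre translation by an exact $1$-form, and the identity $\frac{1}{2}S_0+\Theta^\ast Q=\frac{1}{2}S_1+Q$ of the linear equations) is immediate.
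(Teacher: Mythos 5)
Your proposal is correct and follows essentially the same route as the paper: reduce to Theorem \ref{mainth} (admissibility of the Bergman section for all $n$) by observing that the equivalence of two sections of $\Proj_g$ is implemented by the fibre translation $Q\mapsto Q+\frac{1}{2}(S_1-S_0)$, which is a symplectomorphism of $T^*\Mcal_{g,n}$ generated by a holomorphic function $G_{01}$ on $\Tcal_g$ alone and hence independent of the punctures, so that the $n=0$ equivalence of Bergman, Schottky and Bers from \cite{Kawai,BKN} transports the conclusion to all $n$. The paper states this more tersely; your elaboration of why $\Theta$ preserves $\omega_{can}$ and why the two monodromy maps coincide (including the $SL(2,\C)$ lift, which depends only on the divisor of $Q$) is a faithful unpacking of the same argument rather than a different one.
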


The Definition \ref{defad} allows to formulate our Theorem \ref{theo1}  simply as follows:
\vskip0.3cm
{\it ``The Bergman projective connection is admissible for any $n$ and $g$''}.
\vskip0.3cm
Actually, the set of admissible holomorphic sections of $\Proj_g$ is rather tiny; one of examples of non-admissible holomorphic section
of $\Proj_g$ is given in Remark 5.1 of \cite{BKN}.

Suppose now that $S_0$ and $S_1$ are two admissible holomorphic sections of $\Proj_g$ and write down the same equation in two ways:
\be
\phi''- \left(\frac{1}{2}S_0 +Q_0\right)\phi=0
\la{eq1}
\ee
and
\be
\phi''- \left(\frac{1}{2}S_1 +Q_1\right)\phi=0\;,
\la{eq2}
\ee
where $Q_0$ and $Q_1$ are two meromorphic  quadratic differentails with simple poles related by
\be
Q_0-Q_1=\frac{1}{2}(S_1-S_0)\;.
\ee 
Then the canonical coverings $\Ch_0$ defined by $v_0^2=Q_0$ and $\Ch_1$ defined by   $v_1^2=Q_1$ have different conformal structure.  Moreover, the  periods
 $(A_i^0,B_i^0)$ of $v_0$ and  $(A_i^1,B_i^1)$ of $v_1$ are not related to each other in any simple way (even if $S_0$ and $S_1$ are 
Bergman projective connections corresponding to two different Torelli markings). However, since both $S_0$ and $S_1$ are admissible, the Goldman bracket on 
the character variety of equation (\ref{eq1}),  (\ref{eq2}) implies the coincidence of the homological symplectic forms defined by $Q_0$ and $Q_1$:
\be
\sum_{i=1}^{3g-3+n} dA^0_i\wedge d B^0_i = \sum_{i=1}^{3g-3+n} dA^1_i\wedge d B^1_i
\la{homhom}
\ee
i.e. each admissible projective connection defines its own set of Darboux coordinates for  Goldman's bracket. We don't know how the relation 
(\ref{homhom}) can be verified directly, without using the link with   Goldman's brackets.

\subsection{An analog of Jimbo-Miwa tau-function for the system (\ref{lsi})}

  In the theory of isomonodromic deformations 
of a linear system 
$$\frac{\d \Psi}{\d x}=\sum_{i=1}^N \frac{A_i}{x-x_i}\Psi$$
 the main role is played by the isomonodromic Jimbo-Miwa tau-function defined by equations
\be
\frac{\p\log\tau_{JM}}{\p x_i}=\frac{1}{2}{\rm res}|_{x_i} \frac{\tr (\d \Psi \Psi^{-1})^2}{\d x}\;.
\la{iso}
\ee

A straightforward analog of the definition (\ref{iso}) in the  context of equation (\ref{lsi}):
\be
\d\Psi= \left(\ba{cc} 0 & v \\
uv & 0  \ea \right)\Psi\;,
\la{lsi2}\ee
where
\be
uv= -\f{S_B-S_v}{2v}-v\;,
\la{potfor2}\ee
looks as follows:
\be
\frac{\p\log\tau}{\p \Pcal_{s_i}}:=\frac{1}{4\pi i }\int_{s_i^*} \left(\frac{\tr (\d \Psi \Psi^{-1})^2}{v} +2 v\right)=\frac{1}{2\pi i }\int_{s_i^*}(u+1)v\;,
\la{deftau}\ee
where the cycles $\{s_i,s_i^*\}$ form a symplectic basis in the odd part $H_-$ of $H_1(\Ch)$; the term $2v$ is added to provide compatibility of equations (\ref{deftau}). We notice that the addition of the analogous  term proportional to $\d x$ to the formula (\ref{iso}) does not change the right-hand side since all residues of $\d x$ vanish.
In the case of (\ref{deftau}) the addition of this counter-term is crucial.

Using the form (\ref{potfor2}) of the  potential $u$ the equations (\ref{deftau}) can be equivalently rewritten as follows in terms of the Bergman projective connection:
\be
\frac{\p\log\tau}{\p \Pcal_{s_i}}=-\frac{1}{4\pi i }\int_{s_i^*}\frac{S_B-S_v}{v} \;.
\la{defBerg}
\ee

The tau-function defined by  (\ref{defBerg}) equals the 1/8th power of the Bergman tau-function $\tau_B$ on the moduli space of quadratic differentials studied in
\cite{contemp}:
\be
\tau=\tau_B^{1/8}\;.
\ee

Alternatively the definition (\ref{defBerg}) of the function $\tau$
can be written as follows:
\be
\d\log \tau_B= \f{1}{6\pi i}\sum_{i=1}^{3g-3+n} (\tilde{A}_i\d B_i-\tilde{B}_i\d A_i)\;,
\la{altdef}
\ee
where 
\be
\tilde{A}_i=\int_{a_i^-} (u+1)v\;\hskip0.7cm
\tilde{B}_i=\int_{b_i^-} (u+1)v
\ee
and $(A_i,B_i)$ are, as before, the  periods of the differential $v$ over cycles $a_i^-$ and $b_i^-$.

The function $\tau$ is holomorphic and non-singular if zeros and poles of $Q$ don't merge.
It satisfies the following two main properties:
\begin{itemize}
\item
Transformation under change of Torelli marking (used to define the reference projective connection $S_B$) by 
an $Sp(2g,\Z)$ matrix $\sigma=\left(\begin{array}{cc} A & B\\ C & D\end{array}\right)$:
\be
\tau^\sigma= \kappa  {\rm det}^6 (C\Omega+D)\,\tau\;,
\ee
where $\kappa$ is an 8th root of unity\;.
\item
Homogeneity property:
\be
\tau(\epsilon Q, \Ccal) = \epsilon^{\frac{2}{9}(5g-5+n)}\tau( Q, \Ccal)\;,
\ee
which implies in particular the following relation between periods $(A_i,B_i)$ of the differential $v$ and the periods 
$(\tilde{A}_i,\tilde{B}_i)$ of the differential $(u+1)v$:
\be
\sum_{i=1}^{3g-3+n} (\tilde{A}_i B_i-\tilde{B}_i A_i)= \frac{2\pi i}{3} (5g-5+n)\;.
\ee
\end{itemize}

Variational formulas on spaces of quadratic differentials which are the main analytical tool for computation of Poisson brackets in this paper and in \cite{BKN}
were developed in \cite{JDG,contemp} with the goal of studying the properties of the  Bergman tau-function $\tau_B$;  
the function $\tau_B$ plays an important role in various areas - from the theory of isomonodromic deformations, random matrices and the theory  of Frobenius manifolds to spectral theory.

We expect the function $\tau$ defined in present context to be closely related to the so-called ``Yang-Yang''  function $F$ of \cite{NeRoSa} which is the  
generating ufnction between two  systems of Darboux coordinates - the first is the natural system of Darboux coordinates on $T^*\Mcal_{g,n}$,
and the second is the system of complex Fenchel-Nielsen coordinates on the character variety. In particular, as it was noticed in \cite{BKN}, the function $F$
transforms under a change of Torelli marking of $\CC$ in the same was as $\log\,\tau$. A complete elucidation of the link between $\tau$ and $F$ 
remains a challenging problem.

\section{Riemann sphere with four marked points}

Assume that the poles of $Q$ are placed to $0$, $1$, $\infty$ and $t$ in coordinate $x$.  Then the differential $Q$ can be written as
\be
Q=\frac{\mu}{x(x-1)(x-t)}(dx)^2\;.
\la{genus0}
\ee
As we see, the moduli space $\Qcal_{0,4}^0$ has dimension 2;  $\log t$ can be chosen as  coordinate
on $\Mcal_{0,4}$. To find the corresponding momentum we  choose in (\ref{qk0}) $y_1=0$, $y_2=\infty$, $y_3=1$ and $y_4=t$; then the corresponding quadratic differential $Q_1$ (\ref{Qk0}) equals
$$Q_1=\frac{1}{4\pi i}\frac{1-t}{x(x-1)(x-t)}(dt)^2\;.$$
The comparison with (\ref{genus0}) shows that the momentum $p_1$ equals to
\be
p_1=4\pi i \frac{\mu}{1-t}\;,
\ee
and, therefore, the canonical Poisson bracket $\{p_1,q_1\}=1$ implies 
$$
\{\mu, t\}=\frac{t(1-t)}{4\pi i}\;.
$$

The canonical bidifferential in terms of coordinate $x$ is written as
\be
B(x,y)=\frac{dx dy}{(x-y)^2}\;.
\la{B0}
\ee
Therefore, in the $x$-coordinate the Bergman projective connection $S_B$ is identically vanishing and the equation (\ref{Sint})
takes the form
\be
\varphi''+\frac{\mu}{x(x-1)(x-t)}\phi=0\;,
\la{4poles}
\ee
which is a special case of Heun's equation \cite{Heun}.
 
The canonical covering $\Ch$ is the elliptic curve
\be
w^2=x(x-1)(x-t)
\la{ellc}
\ee
and the differential $v$ on $\Ch$ is given by
\be
v=\frac{\sqrt{\mu}}{\sqrt{x(x-1)(x-t)}}dx\;.
\la{vgen0}
\ee

Choosing two basic cycles $a$ and $b$ on $\Ch$ (say, $a$ goes around $0$ and $1$ and $b$ goes around $1$ and $t$) we get homological coordinates 
as elliptic integrals of first kind:
\be
A=\sqrt{\mu}\int_a\frac{dx}{\sqrt{x(x-1)(x-t)}}\;,\hskip0.7cm 
B=\sqrt{\mu}\int_b\frac{dx}{\sqrt{x(x-1)(x-t)}}\;.
\ee

The equation (\ref{4poles}) has 4 monodromy matrices $M_0,M_1,M_\infty$ and $M_t$; all of their eigenvalues equal to 1. We can choose the 
generators $\ka_0,\ka_1,\ka_\infty$ and $\ka_t$ of the fundamental group  of the 4-punctured sphere such that the matrices $M_i$ satisfy the relation
$$
M_0 M_1 M_\infty M_t=I\;.
$$

The  equations (\ref{varMgam}) give derivatives of monodromy matrices with respect to  $A$ and $B$; in present case $h(x,t)=(x-t)^{-4}$.

It is also easy to derive directly the formulas for derivatives of any monodromy matrix $M_\g$ corresponding to a basepoint $x_0$ with respect to the natural coordinates $(t,\mu)$ on the space of differentials (\ref{genus0}).
Namely, introduce a Wronskian matrix $\Phi$ of two linearly independent solutions $\phi_{1,2}$ of (\ref{4poles}) 
satisfying the initial condition $\Phi(x_0)=I$. This matrix solves the equation
\be
\frac{\d\Phi}{\d x}=\left(\ba{cc} 0 & 1\\
                                -Q & 0\ea\right)\Phi\;.
\la{Phieq}
\ee
Introduce cuts connecting $x_0$ with the singular points $0,1,\infty,t$.  
Differentiating (\ref{Phieq}) with respect to $t$ and $\mu$ and using the variation of parameters formula for 
non-homogeneous equations for $\Phi_t$ and $\Phi_\mu$ we get the following result:
\be
\frac{ \d\Phi(x)}{\d t}=-\mu \Phi(x)\int_{x_0}^x  \frac{\Lambda(y)dy}{y(y-1)(y-t)^2}
\la{varPhit}
\ee
and
\be
\frac{ \d\Phi(x)}{\d\mu}=- \Phi(x)\int_{x_0}^x  \frac{\Lambda(y)dy}{y(y-1)(y-t)}\;,
\la{varPhimu}
\ee
where the integration contour is not supposed to intersect the branch cuts and, 
as before, $\Lambda(y)=\Phi^{-1}(y)\sigma_-\Phi(y)$.

Closing the integration contour along some loop $\gamma$ we get  derivatives of any monodromy matrix $M_\g$ 
with respect to $t$ and $\mu$:
\be
\frac{ \d M_\g}{\d t}=-\mu M_\g\int_{\g}  \frac{\Lambda(y)dy}{y(y-1)(y-t)^2}
\la{varMt}
\ee
and
\be
\frac{\d M_\g}{\d\mu}=- M_\g\int_{\g} \frac{\Lambda(y)dy}{y(y-1)(y-t)}\;.
\la{varMmu}
\ee


\begin{thebibliography}{}

\bibitem{Abikoff} Abikoff,  W., {\it The real analytic theory of Teichm\"uller space}, Lecture Notes in Math., {\bf 820} Springer, Berlin, 144 p. (1980)

\bibitem{BabBer} Babelon O., Bernard D., Talon M., {\it Introduction to classical integrable systems},  Cambridge University Press (2004)

\bibitem{Bers59} Bers, L., {\it Simultaneous Uniformization}, Bull. Amer. Math. Soc. {\bf 66}, 94--97, (1960).

\bibitem{BKN} M.Bertola, D.Korotkin, C.Norton, {\it Symplectic geometry of the moduli space of projective structures in homological coordinates}, Invent.Math. (2017),  https://doi.org/10.1007/s00222-017-0739-z,
arXiv:1506.07918

\bibitem{DouHub}  Douady, A., Hubbard, J. {\it On the Density of Strebel Differentials}, Inventiones Math., {\bf 30} 175-179 (1975)

\bibitem{Fay73} Fay, J. D., {\it Theta-functions on Riemann surfaces},
Lect.Notes in Math.,  {\bf 352} Springer (1973)

\bibitem{Earle} Earle, C. J.
{\it On variation of projective structures}. Riemann surfaces and related topics: Proceedings of the 1978 Stony Brook Conference (State Univ. New York, Stony Brook, N.Y., 1978), pp. 87-99,
Ann. of Math. Stud., 97, Princeton Univ. Press, Princeton, N.J., 1981. 
\bibitem{EKZ}  Eskin, A.,  Kontsevich, M., Zorich, A., {\it Sum of Lyapunov exponents of the Hodge bundle with respect to the Teichmuller geodesic flow},
 Publ. Math. Inst. Hautes \'Etudes Sci. {\bf 120} 207-333 (2014)





\bibitem{GaiTes} Gaiotto, D., Teschner, J., {\it Irregular singularities in Liouville theory and Argyres-Douglas type gauge theories, I}, arXiv:1203.1052 (hep-th) 
 (2012)

\bibitem{GaKaMa} Gallo, D, Kapovich, M., Marden, A. {\it The monodromy groups of Schwarzian equations on closed Riemann surfaces}, 
Ann. of Math., {\bf 151}  625-704 (2000)
\bibitem{Gold84} Goldman, W. {\it The symplectic nature of fundamental groups of surfaces}, Adv. in Math. 
{\bf  54}, 200-225 (1984)
\bibitem{Gold86} Goldman, W. {\it Invariant functions on Lie groups and Hamiltonian flows
of surface group representations}, Invent. math. {\bf 85}, 263-302 (1986)

\bibitem{Gold09}  Goldman, W. {\it Trace coordinates on Fricke spaces of some
simple hyperbolic surfaces},  Handbook of Teichm\"uller theory. Vol. II, 611-684, IRMA Lect. Math. Theor. Phys., 13, Eur. Math. Soc., Zorich, (2009)
 
\bibitem{Gunning1} Gunning, R.C.  {\it Special coordinate coverings of Riemann surfaces}, Math.Annalen, {\bf 170} 67-86 (1967)

\bibitem{Gunning} Gunning, R. {\it  Affine and projective structures on Riemann surfaces}, in ``Riemann Surfaces
and Related Topics'' (I. Kra and B. Maskit, eds.), Ann. of Math. Studies
97, Princeton Univ. Press, Princeton, NJ, (1981)

\bibitem{HawSch}  Hawley, N. S. and  Schiffer, M. {\it Half-order differentials on Riemann surfaces}, Acta Math. {\bf 115}  199-236 (1966)

\bibitem{Hejhal} Hejhal, D.  A.
{\it Monodromy groups and linearly polymorphic functions}.
Acta Math. {\bf 135}  no. 1, 1-55 (1975)

\bibitem{Heun}
Heun, Karl, {\it Zur Theorie der Riemann'schen Functionen zweiter Ordnung mit vier Verzweigungspunkten}, Math. Annalen {\bf 33} No. 2, 161-179 (1888)

\bibitem{Hubbard} Hubbard, J. H. {\it The monodromy of projective structures}, in "Riemann surfaces and Related Topics", Proceedings of 1978 Stony Brook conference, Princeton Univ.Press (1980)

\bibitem{Kapov} Kapovich, M. {\it On monodromy of complex projective structures}, Invent. Math. {\bf 119} 243-265 (1995)

\bibitem{Kawai} Kawai, S. {\it The symplectic nature of the space of projective connections on Riemann surfaces}, Math Ann {\bf 305} 161-182 (1996)

\bibitem{Klein} Klein, F.,  {\it Zur theorie der Abel'shen functionen}, Math.Ann.  {\bf 36}, 1890


\bibitem{JDG} Kokotov, A., Korotkin, D., {\it Tau functions on spaces of Abelian
differentials and higher genus generalization of Ray-Singer formula},
J. Diff. Geom. {\bf 82}, 35-100 (2009)

\bibitem{Leipzig} 
Kokotov, A., Korotkin, D., "Tau-functions on spaces of Abelian and quadratic differentials and
determinants of Laplacians in Strebel metrics of finite volume" math.SP/0405042, preprint No.
46 of Max-Planck Institut for Mathematics in Science, Leipzig (2004)

\bibitem{Advances} Kokotov, A., Korotkin, D.,  Zograf, P., {\it Isomonodromic tau-function on the space of admissible covers}, Adv.in Math., {\bf 227} No.1 586-600 (2011) 

\bibitem{MRL}  Korotkin, D.,  Zograf, P., {\it Tau function and moduli of differentials}, Math.Res.Lett., {\bf 18} No.3, 447-458  (2011)

\bibitem{Mulase} Dumitrescu, O., Mulase, M., {\it Quantum curves for Hitchin fibrations and the Eynard-Orantin theory},
	arXiv:1310.6022 


\bibitem{contemp}  Korotkin, D.,  Zograf, P., {\it Tau function and the Prym class}, Contemporary Mathematics, {\bf 593} 241-261 (2013)

\bibitem{NeRoSa} Nekrasov, N., Rosly, A., Shatashvili, S., {\it Darboux coordinates, Yang-Yang functional, and gauge theory}, arXiv: 1103.3919 [hep-th]

\bibitem{SeiWit} Seiberg, N.,  Witten, E., {\it Monopole Condensation and Confinement In N=2 Supersymmetric Yang-Mills Theory}, Nucl. Phys. {\bf B426} 19-52 (1994),  hep-th/9407087 

\bibitem{Sonoda} Sonoda, H., {\it Functional determinants on punctured Riemann surfaces and their application to string theory}. Nucl.Phys.{\bf B294} 157-192 (1987)

\bibitem{Tyurin1978} Tyurin, A.N., {\it On periods of quadratic differentials}, Russian Math. Surveys
{\bf 33} No. 6, 169-221  (1978)

\bibitem{TZ1} Takhtajan, L., Zograf, P.,  {\it On the uniformization of Riemann surfaces and on the Weil-Petersson metric on the Teichm\"uller and Schottky spaces},  Math. USSR-Sb. {\bf 60} 297-313 (1988)



\end{thebibliography}
\end{document}